\title{Nested Lattice Codes for Secure Bidirectional Relaying with Asymmetric Channel Gains}
\author{\IEEEauthorblockN{Shashank Vatedka\ and Navin Kashyap}
\thanks{The work of Shashank Vatedka was supported by the TCS research scholarship programme.}
\IEEEauthorblockA{Dept.\ of Electrical Communication Engineering \\
Indian Institute of Science, Bangalore, India\\
Email: \{shashank,nkashyap\}@ece.iisc.ernet.in}
}
\begin{document}
\maketitle
\begin{abstract}
 The basic problem of secure bidirectional relaying involves two users who want to exchange messages via an intermediate "honest-but-curious" relay node. 
 There is no direct link between the users; all communication must take place via the relay node. 
 The links between the user nodes and the relay are wireless links with Gaussian noise. 
 It is required that the users' messages be kept secure from the relay. 
 In prior work, we proposed coding schemes based on nested lattices for this problem, assuming that the channel gains from the two user nodes to the relay are identical. 
 We also analyzed the power-rate tradeoff for secure and reliable message exchange using our coding schemes. 
 In this paper, we extend our prior work to the case when the channel gains are not necessarily identical, and are known to the relay node but perhaps not to the users. 
 We show that using our scheme, perfect secrecy can be obtained only for certain values of the channel gains, and analyze the power-rate tradeoff in these cases. 
 We also make similar observations for our strongly-secure scheme. 
\end{abstract}

\section{Introduction}
Lattice codes for Gaussian channels have received a lot of attention in the recent past. They have been shown to achieve the capacity of the power-constrained AWGN channel~\cite{Erez04},
and have been used with great success for physical layer network coding for Gaussian networks~\cite{NazerProc}. They have also been used to design coding schemes for secure and
reliable communication over the Gaussian wiretap channel~\cite{Ling14} and the bidirectional relay~\cite{HeYener,vatedka14}. In this paper, we study secure bidirectional relaying,
where two users $\tA$ and $\tB$ want to exchange messages via an ``honest-but-curious'' relay $\tR$. The relay acts as a passive eavesdropper, but otherwise
conforms to the protocol which it is asked to follow, i.e., it does not modify or tamper with the message it has to forward. 
We also assume that there is no direct link between the user nodes, and all communication between $\tA$ and $\tB$ must happen via $\tR$.

We use the two-phase compute-and-forward protocol~\cite{Nazer11} for bidirectional relaying, which we briefly describe here. 
Let $q$ be a prime number and $m$ be a positive integer.
User nodes $\tA$ and $\tB$ have messages $X$ and $Y$ respectively, which are assumed to be uniformly distributed over
$\mathbb{F}_q^m$, where $\mathbb{F}_q$ denotes the finite field with $q$ elements. 
Let $\oplus$
denote the addition operation in $\mathbb{F}_q^m$.
In the first phase, also called the \emph{multiple access channel (MAC) phase},  the messages are mapped to $n$-dimensional real-valued codewords $\U$ and $\V$ respectively, and transmitted simultaneously to $\tR$, who receives 
\begin{equation}
\W = h_1\U+ h_2\V +\bZ.
\end{equation}
Here $h_1,h_2\in \R$, and $\bZ$ is additive white Gaussian noise (AWGN) with variance $\nsvar$.
The relay computes an integer-linear combination of the messages, $k_1X\oplus k_2Y$, and forwards this to the user nodes in an ensuing \emph{broadcast phase}.
If $q$ does not divide $k_2$ (resp.\ $k_1$), then $\tA$ (resp.\ $\tB$) can recover $Y$ (resp.\ $X$).
In this paper, we will be concerned only with the MAC phase, i.e., we only want to ensure that the relay can compute the integer-linear combination
$k_1X\oplus k_2Y$. In fact, by restricting ourselves to the MAC phase, we can consider the more general problem where the messages $X$ and $Y$ are uniformly distributed over a finite Abelian group $\Gp$, with $\oplus$ denoting addition in $\Gp$, and the relay must be able to compute an integer-linear combination $k_1X\oplus k_2Y$. Here, we use the notation $k_1X$ to denote the sum of $X$ with itself $k_1-1$ times, i.e., $2X=X\oplus X$, $3X=X\oplus X\oplus X$, and so on. Likewise, $k_2Y$ denotes the sum of $Y$ with itself $k_2-1$ times. All our results will hold for this general case where $\tR$ wants to compute $k_1X\oplus k_2Y$, where $X$ and $Y$ are uniformly distributed over a finite Abelian group $\Gp$.

We impose the
additional constraint that $\tR$ must not get any information about the individual messages. 
Specifically, we address the problem under two measures of security:
\begin{itemize}
   \item[(S1)] \emph{Perfect secrecy:} The received vector is independent of the individual messages, i.e., $\W\independent X$ and $\W\independent Y$.
   \item[(S2)] \emph{Strong secrecy:} The information leaked by $\W$ about the individual messages must be vanishingly small for large $n$, i.e., $\lim_{n\to\infty}I(X;\W)=\lim_{n\to\infty}I(Y;\W)=0$
\end{itemize}

The secure bidirectional relaying problem was first studied in~\cite{HeYener} and subsequently in~\cite{HeYenerstrong}, where the authors
gave a strongly-secure scheme for the case $h_1=h_2=1$ using lattice codes and randomization using universal hash functions.
This was later studied by~\cite{vatedka14}, who gave a coding scheme (also for $h_1=h_2=1$) for secrecy using nested lattice codes and randomization using probability mass functions (pmfs) obtained by sampling well-chosen probability density functions (pdfs). It was shown that using a pmf obtained by sampling the Gaussian density, strong secrecy can be obtained (a technique that was first used 
for the Gaussian wiretap channel in~\cite{Ling14}). It was also shown in~\cite{vatedka14} that by choosing a density function having a compactly supported characteristic function, even perfect secrecy can be achieved.
 
In this paper, we extend the results of~\cite{vatedka14}, and make an attempt to study the robustness of the schemes presented there.
 In a practical scenario, the user nodes may not know $h_1$ and $h_2$ exactly, since there is always an error in estimation of the channel gains.
 In this paper, we assume that the user nodes \emph{do not know} the values of the channel gains $h_1$ and $h_2$.
However, the relay is assumed to know $h_1$ and $h_2$ exactly. We want to know if it is still possible to achieve security in this situation.
We split the analysis into two parts: (1) the case when $h_1/h_2$ is irrational, and (2) when $h_1/h_2$ is rational.
We will see that no lattice-based coding scheme can guarantee secrecy in case (1), 
and find sufficient conditions to guarantee perfect/strong security in the latter case.

If $h_1/h_2$ is rational, then we can express $h_1=hl_1$ and $h_2=h l_2$ for some real number $h$ and co-prime integers $l_1$ and $l_2$.
Therefore, in the first few sections, we will assume that the channel gains $h_1$ and $h_2$
are \emph{co-prime integers, but are unknown} to both users, and that $(k_1,k_2)=(h_1,h_2)$.  We want to ensure that the relay can securely compute $k_1X\oplus k_2Y$.
In the specific case of the bidirectional relay problem, we can choose $\Gp=\mathbb{F}_q^m$ to ensure that the user nodes can recover the desired messages from $k_1X\oplus k_2Y$.
Note that if $\Gp$ is an arbitrary finite Abelian group, then it is not guaranteed that one can recover $X$ (resp.\ $Y$) given $Y$ (resp.\ $X$) and $k_1X\oplus k_2Y$. 
The relay also needs to forward $h_1,h_2$ to the users in the broadcast phase to ensure message recovery, since
the users have no knowledge of the channel gains prior to the broadcast phase.

We will mostly study the noiseless scenario, i.e., the relay receives $\W=h_1\U+h_2\V$, and find conditions under which 
our scheme achieves security.
The problem therefore is to ensure secure computation of $k_1X\oplus k_2Y$ from $k_1\U+k_2\V$.
We can see that if the order of $X$ divides $k_1$, then $k_1X\oplus k_2 Y$ is simply $k_2Y$, and
confidentiality of the message $Y$ is lost. We will therefore make the assumption that the \emph{order of
no element of} $\Gp$ divides $k_1$ or $k_2$.
We will also briefly discuss achievable rates in presence of Gaussian noise, but without any proofs.
%

We remark that demanding security in the noiseless scenario is a much stronger condition.
Since the additive noise $\bZ$ is independent of everything else, $X\to h_1\U+h_2\V\to h_1\U+h_2\V+\bZ$ forms a Markov chain, and hence, $I(X;h_1\U+h_2\V+\bZ)\leq I(X;h_1\U+h_2\V)$. 
Therefore, any scheme that achieves perfect/strong secrecy in the noiseless setting also continues to achieve the same in presence of noise. Furthermore,
such a scheme has the added advantage that security is achieved \emph{irrespective of the distribution} on $\bZ$, and even when this distribution is \emph{unknown} to the users.

The paper is organized as follows: The coding scheme is described in Section~\ref{sec:codingscheme}. 
We discuss perfect secrecy in Section~\ref{sec:perfectsec}, and
Theorem~\ref{thm:perfect_asymmetric} gives sufficient conditions for achieving perfect security with integral channel gains. Strong secrecy is studied in Section~\ref{sec:strongsec}, and Theorem~\ref{thm:strongsec_asymmetric}
gives sufficient conditions for achieving strong secrecy with integral channel gains. In Section~\ref{sec:discussion}, we discuss the case where the channel gains are not integral and co-prime, and conclude with some final remarks.
\section{Notation and definitions}\label{sec:notation}
We use the notation followed in~\cite{vatedka14}.  For the basic definitions and results related to lattices, see, e.g.,~\cite{Erez04,vatedka14}.
Given a lattice $\L$, the fundamental Voronoi region is denoted by $\cV(\L)$. 
The Fourier dual lattice of $\L$ is defined as $\hat{\L}:=\{ \x\in\R^n:\langle \x,\y \rangle\in 2\pi \Z \; \forall \y\in \L\}$. If $\mathcal{A}$ and $\mathcal{B}$
are subsets of $\R^n$, then $\mathcal{A}+\mathcal{B}:=\{ \x+\y:\x\in\mathcal{A},\;\y\in\mathcal{B} \}$ denotes their Minkowski sum. Also, for $\x\in\R^n$ and $a,b\in\R$,
$a\x+b\mathcal{B}:=\{ a\x+b\y:\y\in\mathcal{B} \}$.
\subsection{The coding scheme}\label{sec:codingscheme}
A $(\L,\Lc,f)$ coding scheme is defined by the following components:
a  pair of nested lattices $(\Lf,\Lc)$ in $\R^n$, where $\Lc\subset \Lf$, and a well chosen  continuous pdf $f$ over $\R^n$.
We assume that $h_1$ and $h_2$ are integers, and $(k_1,k_2)=(h_1,h_2)$. 
\begin{itemize}
 \item \emph{Lattices:} The nested lattices $\L$ and $\Lc$ are chosen such that $\L/\Lc$ is isomorphic to $\Gp$. To ensure that the user nodes can recover the desired messages from $k_1X\oplus k_2Y$, we could choose $\L$ and $\Lc$ to be nested \emph{Construction-A} lattices~\cite{Erez04} over $\Fq$ for a prime $q$. Specifically, we could choose a $\L$ constructed from an linear code $\cC$ of length $n$ and dimension $m_1$, and $\Lc$ from an  linear code $\cC_0$ having length $n$ and dimension $m_0$, with $\cC_0\subset\cC$. If $m:=m_1-m_0$, then there exists a group isomorphism from $\L/\Lc$ to $\Fq^m$~\cite{Nazer11}. Furthermore, one can recover $X$ (resp. $Y$) from $k_1X \oplus k_2Y$ if $Y$ (resp. $X$) is known, provided that $q$ does not divide $h_1$ or $h_2$.
			However, we will prove our results on secure computation of $k_1X\oplus k_2Y$ for the more general case where $\L$ and $\Lc$ are arbitrary $n$-dimensional nested lattices and $\Gp\cong \L/\Lc$.
 \item \emph{Messages:} The messages are chosen uniformly at random from $\Gp$. Since $\L/\Lc\cong \Gp$, each message can be identified by a coset of $\Lc$  in $\Lf$. We also define $M:=|\Gp|$, and the rate of the code is $R=\frac{1}{n}\log_2M$.
 \item \emph{Encoding:} Given a message/coset $x\in \Gp$, node $\tA $ transmits a vector $\u\in\R^n$ with probability
 \begin{equation}
  p_{\U|x}(\u)=\begin{cases}
               \frac{f(\u)}{\sum_{\u'\in x}f(\u')}, &\text{ if }\u\in x \\
               0, & \text{ otherwise.}
              \end{cases}
              \label{eq:pU_X}
 \end{equation}
 Likewise, $\tB$ transmits $\v\in y$ with probability  $p_{\V|y}(\v)$. The scheme can satisfy an average power constraint: $\frac{1}{n}\mathbb{E}\Vert \U \Vert^2=\frac{1}{n}\mathbb{E}\Vert \V \Vert^2\leq \cP$.
 \item \emph{Decoding:} The relay finds the closest point in $\L$ to the received vector $\w$, and determines $h_1X\oplus h_2Y$ to be the coset to which this point belongs.
\end{itemize}

We are mainly interested in two kinds of pdfs $f$ over $\R^n$:
\begin{itemize}
 \item \emph{Density with a compactly supported characteristic function for perfect secrecy:} Let $\psi$ be the characteristic function corresponding to $f$. Let $\cR(\psi)$ be the support of $\psi$, 
 i.e., the region where $\psi$ is nonzero. We will show that for certain values of $(h_1,h_2)$, if $\cR(\psi)$ is supported within a certain compact subset of $\R^n$, then perfect secrecy can be obtained.
 \item \emph{The Gaussian density for strong secrecy:} For $\x,\w\in\R^n$ and $P>0$, we define
 \[
    g_{-\x,\sqrt{P}}(\w)= \frac{1}{(2\pi P)^{n/2}}e^{-\frac{\Vert \w-\x\Vert^2}{2P}},
 \]
 and $g_{-\x,\sqrt{P}}(\L)=\sum_{\w\in\L}g_{-\x,\sqrt{P}}(\w)$. For ease of notation, we will use $g_{\sqrt{P}}(\w)$ and $g_{\sqrt{P}}(\L)$ instead of $g_{\0,\sqrt{P}}(\w)$ and
 $g_{\0,\sqrt{P}}(\L)$ respectively. We will show that if $\Lc$ satisfies certain properties, then with $f=g_{\sqrt{P}}$, we can obtain strong secrecy.
\end{itemize}

We say that a rate $R$ is \emph{achievable} with perfect (resp.\ strong) secrecy using our scheme if there exist $(\L,\Lc,f)$ coding scheme having rate $R$ such that (S1) (resp.\ (S2)) is satisfied, and the probability of error of decoding $h_1X\oplus h_2 Y$ at the relay goes to $0$ as $n\to\infty$.

\section{Perfect secrecy with integral channel gains}\label{sec:perfectsec}

\subsection{The noiseless case}\label{sec:perfect_noiseless}

In this section and the next, we assume that $h_1$ and $h_2$ are co-prime integers, and $(k_1,k_2)=(h_1,h_2)$.
A key tool in studying the scheme for perfect security is the following lemma from~\cite{vatedka14}, which we reproduce here:
\begin{lemma}[Proposition~5, \cite{vatedka14}]
 Let $\x\in\R^n$. Let $f$ be a pdf over $\R^n$ such that the corresponding characteristic function, $\psi$, is compactly supported within $\cV(\hat{\Lambda})$.
 Then, $\phi(\mathbf{t}):=\sum_{\u\in\hat{\L}}\psi(\mathbf{t}+\u)e^{-i\langle\x,\u\rangle}$ is the characteristic function of a random vector supported within $\L+\x$, 
 and having pmf 
 \[
    p(\u)=\begin{cases}
              \mathrm{vol}(\cV(\L))f(\u) &\text{ if }\u\in \L+\x\\
              0&\text{ otherwise.}
             \end{cases}
 \]
 \label{lemma:phi_f}
\end{lemma}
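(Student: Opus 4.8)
The plan is to recognize this statement as an instance of the Poisson summation formula: the function $\phi$ is a periodization of $\psi$ over the dual lattice $\hat{\L}$ (twisted by the phase $e^{-i\langle\x,\u\rangle}$), and periodizing a characteristic function is dual to \emph{sampling} the underlying density on the primal lattice. Concretely, I would introduce the candidate discrete random vector $\U$ with
\[
  p(\u)=\mathrm{vol}(\cV(\L))\,f(\u)\ \text{ on }\ \L+\x,\qquad p(\u)=0\ \text{ otherwise,}
\]
and reduce the lemma to two assertions: (i) the characteristic function $\sum_{\u\in\L+\x}p(\u)e^{i\langle\mathbf{t},\u\rangle}$ of $\U$ coincides with the given $\phi(\mathbf{t})$, and (ii) $p$ is a genuine pmf.

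For (i) I would start from $\psi(\mathbf{t}+\u)=\int_{\R^n}f(\w)e^{i\langle\mathbf{t}+\u,\w\rangle}\,d\w$, substitute it into the definition of $\phi$, and interchange the (locally finite) sum over $\hat{\L}$ with the integral to get
\[
  \phi(\mathbf{t})=\int_{\R^n}f(\w)e^{i\langle\mathbf{t},\w\rangle}\Bigl(\sum_{\u\in\hat{\L}}e^{i\langle\u,\w-\x\rangle}\Bigr)d\w .
\]
The inner sum is a Dirac comb: with the normalization $\hat{\L}=\{\x:\langle\x,\y\rangle\in2\pi\Z\ \forall\y\in\L\}$ one has the distributional identity $\sum_{\u\in\hat{\L}}e^{i\langle\u,\z\rangle}=\mathrm{vol}(\cV(\L))\sum_{\lambda\in\L}\delta(\z-\lambda)$, which is exactly Poisson summation; I would pin down the constant $\mathrm{vol}(\cV(\L))$ by checking the one-dimensional case $\L=a\Z$, $\hat{\L}=\tfrac{2\pi}{a}\Z$. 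Applying it with $\z=\w-\x$ collapses the integral onto the points of $\L+\x$, and after reindexing $\u=\x+\lambda$ as $\lambda$ ranges over $\L$ I obtain $\phi(\mathbf{t})=\mathrm{vol}(\cV(\L))\sum_{\u\in\L+\x}f(\u)e^{i\langle\mathbf{t},\u\rangle}$, which is precisely the characteristic function of $p$.

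For (ii), nonnegativity is immediate from $f\ge0$, and the total-mass condition is the one place where the hypothesis ``$\psi$ compactly supported in $\cV(\hat{\L})$'' is essential. I would evaluate $\phi$ at $\mathbf{t}=\0$: since $\0$ is the only point of $\hat{\L}$ lying in $\cV(\hat{\L})$ and $\psi$ vanishes outside $\cV(\hat{\L})$, every term with $\u\ne\0$ drops out, leaving $\phi(\0)=\psi(\0)=\int_{\R^n}f=1$, so $\sum_{\u\in\L+\x}p(\u)=1$. Intuitively, compact support within $\cV(\hat{\L})$ is the aliasing-free (Nyquist) condition guaranteeing that the lattice samples of $f$, scaled by $\mathrm{vol}(\cV(\L))$, sum to exactly $1$ rather than to a mere Riemann-type approximation of $\int f$.

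I expect the main obstacle to be purely technical: justifying the interchange of summation and integration and treating the Dirac-comb identity as a genuine rather than formal equality. The compact support of $\psi$ helps here as well, since for each fixed $\mathbf{t}$ only finitely many (indeed, generically one) translates $\mathbf{t}+\u$ meet $\cR(\psi)$, so $\phi$ is a well-defined finite sum of continuous functions; a fully rigorous treatment would invoke Poisson summation on a suitable regularity class for $f$ (e.g.\ adequate smoothness and decay), or truncate the lattice sum and pass to the limit. Modulo this analytic bookkeeping, the statement is a direct consequence of the sampling--periodization duality.
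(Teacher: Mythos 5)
The paper itself does not prove this lemma: it is reproduced verbatim from Proposition~5 of \cite{vatedka14}, so there is no in-paper proof to compare against. Your argument---viewing $\phi$ as a twisted periodization of $\psi$ over $\hat{\L}$, dualizing it via Poisson summation into sampling of $f$ on $\L+\x$ (with the correct normalization $\mathrm{vol}(\cV(\L))$), and using the compact-support hypothesis only to force $\phi(\0)=\psi(\0)=1$ so that the samples form a genuine pmf---is correct and is essentially the standard sampling--periodization argument used in the cited reference. For the analytic bookkeeping you flag, a clean way to finish without distributional Dirac combs is to observe that $h(\t):=e^{-i\langle\x,\t\rangle}\phi(\t)$ is $\hat{\L}$-periodic and, by the compact support of $\psi$, locally a finite sum of continuous functions; expanding $h$ in a Fourier series on $\R^n/\hat{\L}$ (whose frequencies lie in $\L$) and computing each coefficient by unfolding the sum over $\hat{\L}$ into an integral over $\R^n$ identifies it, via Fourier inversion, as $\mathrm{vol}(\cV(\L))f(\x+\lambda)\ge 0$, and nonnegativity of these coefficients together with continuity of $h$ at $\0$ gives absolute convergence and the total-mass identity in one stroke.
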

\vspace{-0.4cm}
In other words, if $\psi$ is compactly supported within $\cV(\hLf)$, then $\phi(\t)$ is the characteristic function corresponding to the pmf obtained by sampling and normalizing $f$ over  $\L+\x$.

Given message (coset) $x$, user $\tA$ transmits a random point $\U$ in the coset $x$ according to distribution $p_{\U|x}$ as given by (\ref{eq:pU_X}),
and given message $y$ at $\tB$, the user transmits $\V$ in the coset $y$ according to distribution $p_{\V|y}(\v)$. The density $f$ from which these pmfs are
sampled from is compactly supported within $\cR(\psi)$. 
The following result gives sufficient conditions under which perfect security is achieved.
\begin{theorem}
 If the order of no nonzero element of $\L/\Lc$ divides $h_1$ or $h_2$, and $\cR(\psi)$ is contained within the interior of $\frac{2\cV(\hLc)}{|h_1|+|h_2|}$,  
 then $(h_1\U+h_2\V)\independent X$ and $(h_1\U+h_2\V)\independent Y$.
 \label{thm:perfect_asymmetric}
\end{theorem}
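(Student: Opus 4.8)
The plan is to prove perfect secrecy through conditional characteristic functions: I will show that the characteristic function of $\W=h_1\U+h_2\V$ conditioned on $X=x$ is the same for every $x$, which (since characteristic functions determine distributions) gives $\W\independent X$; the independence $\W\independent Y$ then follows from the symmetry of the hypotheses under the exchange $(h_1,X,\U)\leftrightarrow(h_2,Y,\V)$.

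First I would condition on both messages. Given $X=x$ and $Y=y$ the vectors $\U$ and $\V$ are independent, so the conditional characteristic function factors as $\phi_{\U|x}(h_1\t)\,\phi_{\V|y}(h_2\t)$. Each message is a coset of $\Lc$ in $\Lf$; writing $\lambda_x,\lambda_y\in\Lf$ for coset representatives and invoking Lemma~\ref{lemma:phi_f} with the lattice $\Lc$ --- which is legitimate because $\cR(\psi)\subseteq\frac{2}{|h_1|+|h_2|}\cV(\hLc)\subseteq\cV(\hLc)$, the last inclusion holding since $h_1,h_2$ are nonzero integers (by the order hypothesis) and so $|h_1|+|h_2|\ge2$ --- I obtain
\[
\phi_{\U|x}(h_1\t)=\sum_{\u\in\hLc}\psi(h_1\t+\u)\,e^{-i\langle\lambda_x,\u\rangle},
\]
and the analogous expression for $\phi_{\V|y}(h_2\t)$, summing over $\u'\in\hLc$. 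Averaging over the uniform message $Y$ collapses the $\V$-sum: for each fixed $\u'$, the quantity $\frac1M\sum_{y\in\Gp}e^{-i\langle\lambda_y,\u'\rangle}$ is the average of a character of $\Lf/\Lc$, equal to $1$ when $\u'\in\hLf$ and $0$ otherwise. Hence
\[
\phi_{\W|x}(\t)=\Bigl(\sum_{\u\in\hLc}\psi(h_1\t+\u)\,e^{-i\langle\lambda_x,\u\rangle}\Bigr)\Bigl(\sum_{\u'\in\hLf}\psi(h_2\t+\u')\Bigr),
\]
where the second factor does not depend on $x$.

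The hard part --- and the step I expect to be the main obstacle --- is to show that the first factor is also free of $x$ whenever the second factor is nonzero, for this is precisely where the constant $\frac{2}{|h_1|+|h_2|}$ is forced. Fix $\t$, and suppose $\psi(h_1\t+\u)\ne0$ for some $\u\in\hLc$ and $\psi(h_2\t+\u')\ne0$ for some $\u'\in\hLf$. Eliminating $\t$ gives $h_2\u-h_1\u'=h_2(h_1\t+\u)-h_1(h_2\t+\u')$; the two points on the right lie in $\cR(\psi)$, which sits in the interior of $\frac{2}{|h_1|+|h_2|}\cV(\hLc)$, so, using that $\cV(\hLc)$ is convex and symmetric together with the identity $\alpha\cV(\hLc)+\beta\cV(\hLc)=(\alpha+\beta)\cV(\hLc)$ for $\alpha,\beta\ge0$, their combination $h_2\u-h_1\u'$ lands in the \emph{interior} of $2\cV(\hLc)$. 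But $h_2\u-h_1\u'\in\hLc$, and the only lattice point in the interior of $2\cV(\hLc)$ is $\0$: for any nonzero $\mathbf{p}\in\hLc$ the midpoint $\mathbf{p}/2$ is equidistant from $\0$ and $\mathbf{p}$, hence on $\partial\cV(\hLc)$, so $\mathbf{p}\in\partial(2\cV(\hLc))$. Therefore $h_2\u=h_1\u'\in\hLf$.

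Finally I would convert $h_2\u\in\hLf$ into $\u\in\hLf$, and this is exactly where the order hypothesis enters. Under the standard isomorphism $\hLc/\hLf\cong\Lf/\Lc\cong\Gp$, the coset $\u+\hLf$ has order dividing $h_2$ (because $h_2\u\in\hLf$); since no nonzero element of $\Gp$ has order dividing $h_2$, we get $\u\in\hLf$ and hence $e^{-i\langle\lambda_x,\u\rangle}=1$. Running this for each surviving $\u$ shows the first factor equals $\sum_{\u}\psi(h_1\t+\u)$ independently of $x$; and when the second factor vanishes the product is trivially $x$-free. Thus $\phi_{\W|x}(\t)$ does not depend on $x$ for any $\t$, giving $\W\independent X$. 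Applying the same argument with the roles of the two users exchanged (now using that no nonzero element of $\Gp$ has order dividing $h_1$) yields $\W\independent Y$, completing the proof.
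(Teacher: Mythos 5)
Your proposal is correct and takes essentially the same route as the paper's own proof: expand the conditional characteristic function of $h_1\U+h_2\V$ via Lemma~\ref{lemma:phi_f}, reduce independence to disjointness of the supports of the translated copies of $\psi$, eliminate $\t$ to place a point of $\hLc$ in the interior of $2\cV(\hLc)$ (forcing it to be $\0$), and then invoke the order hypothesis through the isomorphism $\hLc/\hLf\cong\L/\Lc$. The differences are only presentational --- you collapse the average over $y$ by character orthogonality instead of applying the lemma at the fine-lattice level, and you phrase the final step as showing the surviving phases equal $1$ rather than as a contradiction --- with your handling of the convexity/symmetry of $\cV(\hLc)$ being, if anything, slightly more explicit than the paper's.
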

If $\L$ and $\Lc$ are Construction-A lattices obtained from linear codes over $\mathbb{F}_q$,
then the order of no nonzero element of $\L/\Lc$ divides $h_1$ or $h_2$ iff $q$ does not divide $h_1$ or $h_2$. 

We can choose a characteristic function $\psi$ which is supported within a ball of radius 
$r=\alpha\rpack(\hLc)$ ($\alpha\leq1)$, where $\rpack(\hLc)$ denotes the packing radius of $\hLc$. 
Such characteristic functions indeed exist, and the interested reader is directed to~\cite{vatedka14} for examples.
If $r<2\rpack(\hLc)/(|h_1|+|h_2|)$, then we certainly have $\cR(\psi)\subset 2\cV(\hLc)/(|h_1|+|h_2|)$, which guarantees perfect secrecy.
Therefore, perfect secrecy can be attained for all $h_1,h_2$ that have the order of no element of $\Gp$ as a divisor, and
$2/(|h_1|+|h_2|)> \alpha$. An interesting point to note at this juncture is that the nested lattice pair does not have to satisfy any additional
properties in order to obtain perfect secrecy. The above result holds for any pair of nested lattices, and for any value of the dimension $n$, unlike most
results on secrecy which usually require the lattices to satisfy special properties and $n$ to be sufficiently large. 
\subsubsection*{Proof of Theorem~\ref{thm:perfect_asymmetric}}
Fix any $x,y\in\Gp$.
We want to show that $p_{h_1\U+h_2\V|x}=p_{h_1\U+h_2\V}$, and
$p_{h_1\U+h_2\V|y}=p_{h_1\U+h_2\V}$.  We only prove the first statement here, and the second can be proved analogously. 
Let $\psi$ be the characteristic function corresponding to $f$, and $\phi_{h_1\U|x}$  be the characteristic function of $h_1\U$ conditioned on $X=x$.
Furthermore, let $\phi_{h_1\U}$ and $\phi_{h_2\V}$ be the characteristic functions of $h_1\U$ and $h_2\V$ respectively.
We will show that $\phi_{h_1\U|x}\phi_{h_2\V}=\phi_{h_1\U}\phi_{h_2\V}$.
  Let $\x$ be the coset representative of $x$ within $\cV(\Lc)$.
 Using Lemma~\ref{lemma:phi_f}, we have
 \[
 \phi_{h_1\U}(\t)=\sum_{\mathbf{\lambda}\in\hLf}\psi\left( \frac{\mathbf{\lambda}+\t}{|h_1|} \right),\quad 
 \phi_{h_2\V}(\t)=\sum_{\mathbf{\lambda}\in\hLf}\psi\left( \frac{\mathbf{\lambda}+\t}{|h_2|} \right),
\]
and
\[
  \phi_{h_1\U|x}(\t)=\sum_{\mathbf{\lambda}\in \hLc}\psi\left( \frac{\mathbf{\lambda}+\t}{|h_1|} \right)e^{-i\langle \mathbf{\lambda},\x\rangle}.
 \]
Since $\Lc\subset\Lf$, we have $\hLf\subset \hLc$. Using this, and the fact that $\langle \mathbf{\lambda},\x\rangle\in 2\pi \Z$ for $\lambda\in \hLf$,
we can write
\begin{align}
 \phi_{h_1\U|x}(\t)
                               &=\phi_{h_1\U}(\t)+
                                        \sum_{\mathbf{\lambda}\in \hLc\setminus \hLf}\psi\left( \frac{\mathbf{\lambda}+\t}{|h_1|} \right)e^{-i\langle\mathbf{\lambda},\x\rangle}&\label{eq:phi_ux_2}
\end{align}
Therefore,  $\phi_{h_1\U|x}(\t)\phi_{h_2\V}(\t)=\phi_{h_1\U}(\t)\phi_{h_2\V}(\t)$ is equivalent to
$$
  \phi_{h_2\V}(\t) \sum_{\mathbf{\lambda}\in \hLc\setminus \hLf}\psi\left( \frac{\mathbf{\lambda}+\t}{|h_1|} \right)e^{-i\langle\mathbf{\lambda},\x\rangle}=0,
$$
or
\[
   \sum_{\mathbf{\lambda}'\in\hLf}\psi\left( \frac{\mathbf{\lambda}'+\t}{|h_2|} \right)
   \left(\sum_{\mathbf{\lambda}\in \hLc\setminus \hLf}\psi\left( \frac{\mathbf{\lambda}+\t}{|h_1|} \right)e^{-i\langle\mathbf{\lambda},\x\rangle}\right) =0.
\]
It is enough to show that for every $\mathbf{\lambda}_1\in \hLc\setminus\hLf$, $\mathbf{\lambda}_2\in \hLf$, and  $ \t\in \R^n$,
$
 \psi\left( \frac{\mathbf{\lambda}_1+\t}{|h_1|}\right)\psi\left( \frac{\mathbf{\lambda}_2+\t}{|h_2|} \right)=0$.
 Observe that 
$$
 \text{Supp}\left(  \psi\left( \frac{\mathbf{\lambda}_1+\t}{|h_1|}\right) \right)=\frac{\cR(\psi)-\mathbf{\lambda}_1}{|h_1|},
$$
and 
$$
  \text{Supp}\left(  \psi\left( \frac{\mathbf{\lambda}_2+\t}{|h_2|}\right) \right)=\frac{\cR(\psi)-\mathbf{\lambda}_2}{|h_2|}.
$$
We will show that for every $\mathbf{\lambda}_1\in \hLc\setminus\hLf$ and $\mathbf{\lambda}_2\in \hLf$,
$$
  \text{Supp}\left(  \psi\left( \frac{\mathbf{\lambda}_2+\t}{|h_2|}\right) \right)\bigcap \: \text{Supp}\left(  \psi\left( \frac{\mathbf{\lambda}_1+\t}{|h_1|}\right) \right)=\{ \},
$$
or equivalently,
$$
  \left( \frac{\cR(\psi)-\mathbf{\lambda}_1}{|h_1|} \right)\bigcap \left( \frac{\cR(\psi)-\mathbf{\lambda}_2}{|h_2|}\right)=\{ \},
$$
where $\{ \}$ denotes the empty set.

 Let us assume the contrary, that there exist $\t_1,\t_2$ in $\cR(\psi)$,
 $\mathbf{\lambda}_1\in \hLc\setminus\hLf$ and $\mathbf{\lambda}_2\in \hLf$ such that
 $
  \frac{\t_1-\mathbf{\lambda}_1}{|h_1|}=\frac{\t_2-\mathbf{\lambda}_2}{|h_2|}.
 $
This can be rewritten as
\begin{equation}
 |h_2|\t_1-|h_1|\t_2=|h_2|\mathbf{\lambda}_1-|h_1|\mathbf{\lambda}_2.
\label{eq:h2t1_h1t2}
 \end{equation}
Clearly, $|h_2|\t_1-|h_1|\t_2$ lies in $(|h_2|+|h_1|)\cR(\psi)$, which is contained in the interior of $2\cV(\hLc)$.
Since $|h_2|\mathbf{\lambda}_1-|h_1|\mathbf{\lambda}_2\in \hLc$, the requirement (\ref{eq:h2t1_h1t2}) can be satisfied only
if $|h_2|\mathbf{\lambda}_1-|h_1|\mathbf{\lambda}_2=\0$. 
To complete the proof, we will obtain a contradiction by showing that this quantity must in fact be nonzero. 
To this end, we write $\mathbf{\lambda}_1=\mathbf{\lambda}_1^{(0)}+\mathbf{\lambda}_1^{(1)}$, where
$\mathbf{\lambda}_1^{(0)}\in \hLc\cap \cV(\hLf)$, and $\mathbf{\lambda}_1^{(1)}\in \hLf$. 
Therefore, $|h_2|\mathbf{\lambda}_1^{(1)}-|h_1|\mathbf{\lambda}_2\in\hLf$.
Since $\mathbf{\lambda}_1\in \hLc\setminus\hLf$,
we are assured that $\mathbf{\lambda}_1^{(0)}$ is nonzero. 
Using the quotient group duality property of orthogonal subgroups, 
it can be shown that the quotient group $\hLc/\hLf$ is isomorphic to $\L/\Lc$~\cite{ForneyTrott04}.
Now, we have assumed that the order of no nonzero element of $\L/\Lc$
divides $h_1$ or $h_2$.
Therefore, the order of no nonzero element of $\hLc/\hLf$ divides $h_1$ or $h_2$.
Hence, $[|h_2|\mathbf{\lambda}_1^{(0)}]\bmod\hLf\neq \0$; in particular, this means that
$|h_2|\mathbf{\lambda}_1^{(0)}\in\hLc\setminus\hLf$.
We can therefore say that 
$|h_2|\mathbf{\lambda}_1-|h_1|\mathbf{\lambda}_2\in \hLc\setminus\hLf$, 
from which the desired contradiction follows.
This completes the proof of the theorem. \qed

\subsection{Achievable rates in presence of Gaussian noise}\label{sec:perfect_noisy}
We choose $\psi$ to be a characteristic function supported within a ball of radius $r=\alpha\rpack(\hLc)$, as discussed in Section~\ref{sec:perfect_noiseless}. 
For a given $\Lc$, it can be shown that the average transmit power can be made no less than $\frac{n}{r^2}(1+o(1))$, where $o(1)\to 0$ as $n\to\infty$. See, e.g.,~\cite{vatedka14}
for more details, and for the explicit form of the characteristic function that achieves this minimum. The following theorem can be proved analogously to~\cite[Theorem~1]{vatedka14}.
\begin{theorem}
Let $(\L,\Lc)$ be a pair of nested lattices such that $\Lc$ is good for covering, $\hLc$ is good for packing, 
and $\L$ is good for AWGN channel coding\footnote{For definitions of various goodness properties of lattices, see e.g.~\cite{Erez04}.}. 
Let $\psi$ be supported within a ball of radius $r=\alpha\rpack(\hLc)$. Then, a rate of $\frac{1}{2}\log_2 \frac{\alpha^2P}{\nsvar} - \log_2(2e)$, is achievable with perfect secrecy as long as no nonzero element of $\L/\Lc$ has order which divides either $h_1$ or $h_2$, and
$2/(|h_1|+|h_2|)> \alpha$.
\end{theorem}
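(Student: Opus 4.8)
The plan is to read this as a corollary of Theorem~\ref{thm:perfect_asymmetric} for the secrecy part and to import the reliability-and-power analysis of \cite[Theorem~1]{vatedka14} essentially verbatim; the only genuinely new ingredients are the verification of the \emph{asymmetric} secrecy condition and the bookkeeping of the constant $\log_2(2e)$ in the power--rate tradeoff. Concretely, I would fix $P$ and $\nsvar$, exhibit a sequence of nested lattice pairs $(\L,\Lc)$ together with a characteristic function $\psi$ supported in a ball of radius $r=\alpha\rpack(\hLc)$, and then verify three things in turn: perfect secrecy (S1); vanishing decoding error at $\tR$; and that the per-dimension power is at most $P$ while the rate approaches $\frac12\log_2\frac{\alpha^2P}{\nsvar}-\log_2(2e)$.

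For secrecy, the point is that the hypothesis $2/(|h_1|+|h_2|)>\alpha$ is exactly what is needed to meet the support condition of Theorem~\ref{thm:perfect_asymmetric}. Since $\psi$ is supported in the ball $B(r)$ of radius $r=\alpha\rpack(\hLc)$, and the ball of radius $\rpack(\hLc)$ is by definition inscribed in $\cV(\hLc)$, scaling by $2/(|h_1|+|h_2|)$ gives $B\left(\frac{2\rpack(\hLc)}{|h_1|+|h_2|}\right)\subseteq\frac{2}{|h_1|+|h_2|}\cV(\hLc)$. The strict inequality $\alpha<2/(|h_1|+|h_2|)$ then places $\cR(\psi)\subseteq B(r)$ strictly inside, i.e.\ in the interior of $\frac{2}{|h_1|+|h_2|}\cV(\hLc)$. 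Together with the order hypothesis, Theorem~\ref{thm:perfect_asymmetric} yields $(h_1\U+h_2\V)\independent X$ and $(h_1\U+h_2\V)\independent Y$ in the noiseless model; the Markov chain $X\to h_1\U+h_2\V\to\W$ noted earlier then upgrades this to $I(X;\W)=I(Y;\W)=0$, so (S1) holds even in the presence of noise.

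For reliability, I would first observe that every transmitted vector lies in $\L$: the coset representative satisfies $\x\in\L\cap\cV(\Lc)$, so $\U\in\x+\Lc\subseteq\L$ and likewise $\V$, whence $h_1\U+h_2\V\in\L$ because $h_1,h_2\in\Z$. Thus $\tR$ faces the standard problem of recovering an $\L$-point from $\W=(h_1\U+h_2\V)+\bZ$ by closest-point decoding, after which reduction modulo $\Lc$ returns the coset $h_1X\oplus h_2Y$. Since $\L$ is good for AWGN channel coding, the error probability tends to $0$ as long as the volume-to-noise ratio exceeds the Poltyrev threshold, i.e.\ $\mathrm{vol}(\cV(\L))^{2/n}>2\pi e\,\nsvar\,(1+o(1))$; I would take $\mathrm{vol}(\cV(\L))^{2/n}\to 2\pi e\,\nsvar$.

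The rate computation is where $\log_2(2e)$ is produced, and is the part to track with care. Writing $R=\frac1n\log_2|\Gp|=\frac12\log_2\frac{\mathrm{vol}(\cV(\Lc))^{2/n}}{\mathrm{vol}(\cV(\L))^{2/n}}$, I would set the power to its minimum $P=\frac{n}{r^2}(1+o(1))$ from Section~\ref{sec:perfect_noisy} (whose achievability uses that $\Lc$ is good for covering) and chain it with $r=\alpha\rpack(\hLc)$, the packing goodness $\rpack(\hLc)^2\to\frac14 r_{\mathrm{eff}}(\hLc)^2$ with $r_{\mathrm{eff}}(\hLc)^2\approx\frac{n}{2\pi e}\mathrm{vol}(\cV(\hLc))^{2/n}$, and the dual identity $\mathrm{vol}(\cV(\hLc))^{2/n}=(2\pi)^2/\mathrm{vol}(\cV(\Lc))^{2/n}$. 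Eliminating $r$ and $\rpack(\hLc)$ gives $\mathrm{vol}(\cV(\Lc))^{2/n}\to\frac{\pi\alpha^2 P}{2e}$, and dividing by $\mathrm{vol}(\cV(\L))^{2/n}\to 2\pi e\,\nsvar$ collapses to $\frac12\log_2\frac{\alpha^2P}{4e^2\nsvar}=\frac12\log_2\frac{\alpha^2P}{\nsvar}-\log_2(2e)$. The main obstacle is not any single inequality but the \emph{simultaneous realizability}: one must invoke the existence (as in \cite{vatedka14,Erez04}) of a single nested sequence $(\L,\Lc)$ with $\L$ AWGN-good, $\hLc$ packing-good, $\Lc$ covering-good, the prescribed volumes, and $\L/\Lc\cong\Gp$ meeting the order condition --- for Construction-A lattices over $\Fq$ the last point reduces to choosing the prime $q$ coprime to $h_1$ and $h_2$ --- and then check that sampling the power-optimal $\psi$ over the $\Lc$-cosets actually \emph{attains} the power $\frac{n}{r^2}(1+o(1))$ rather than merely being lower bounded by it.
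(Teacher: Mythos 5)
Your proposal is correct and follows essentially the same route as the paper, which itself proves this result by combining the remark after Theorem~\ref{thm:perfect_asymmetric} (that $r=\alpha\rpack(\hLc)<2\rpack(\hLc)/(|h_1|+|h_2|)$ forces $\cR(\psi)$ into the interior of $\frac{2}{|h_1|+|h_2|}\cV(\hLc)$, with the noiseless-to-noisy upgrade via the Markov chain) with the power--rate analysis of \cite[Theorem~1]{vatedka14}, including the minimum power $\frac{n}{r^2}(1+o(1))$ and the goodness properties. Your bookkeeping of the constants (packing goodness giving the factor $\frac14$, the dual-volume identity, and the AWGN threshold $2\pi e\nsvar$, collapsing to $\log_2(2e)$) is exactly the calculation the paper delegates to \cite{vatedka14}.
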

\section{Strong secrecy with integral channel gains}\label{sec:strongsec}

\subsection{The noiseless case}\label{sec:strong_noiseless}
To obtain strong secrecy, we use the pmf obtained by sampling the Gaussian density, i.e., $f=g_{\sqrt{P}}$ in (\ref{eq:pU_X}).
 For $\theta>0$, the \emph{flatness factor}, $\epsilon_{\L}(\theta)$, is defined as~\cite{Ling14}
\vspace{-0.1cm}
\[
 \epsilon_{\L}(\theta)=\max_{\x\in\cV(\L)}\left\vert \text{vol}(\cV(\L))\: g_{\x,\theta}(\L)-1 \right\vert.
\]
This parameter will be used to bound the mutual information between the
individual messages and $\W$.
The following properties of $\epsilon_{\L}$ will be useful in the remainder of the paper:
\begin{lemma}[\cite{Ling14}]
   For every $\z\in \R^n$ and $\theta>0$, we have
   \[
      \frac{g_{\z,\theta}(\L)}{g_{\theta}(\L)}\in \left[\frac{1-\epsilon_{\L}(\theta)}{1+\epsilon_{\L}(\theta)},1\right]
   \]
   Furthermore, for every $\kappa\geq \theta$ and $a>0$, we have $\epsilon_{\L}(\theta)\geq \epsilon_{\L}(\kappa)$, and 
   $\epsilon_{a\L}(a\theta)=\epsilon_{\L}(\theta)$.
   \label{lemma:flatnessfactor_prop}
\end{lemma}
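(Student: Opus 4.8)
The plan is to derive all three assertions from a single Fourier-series representation of the periodized Gaussian obtained via the Poisson summation formula. Writing $\mathrm{vol}(\cV(\L))$ for the covolume of $\L$ and recalling that, with the $2\pi$-convention matching the definition of $\hat{\L}$, the Fourier transform of $g_\theta$ is $\t\mapsto e^{-\theta^2\Vert\t\Vert^2/2}$, I would first establish the identity
\[
\mathrm{vol}(\cV(\L))\,g_{\x,\theta}(\L)=\sum_{\t\in\hat{\L}}e^{-\theta^2\Vert\t\Vert^2/2}\,e^{i\langle\t,\x\rangle},
\]
valid for every $\x\in\R^n$. The Gaussian decay makes both the lattice sum and the dual sum absolutely convergent, so Poisson summation applies without difficulty. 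Separating the $\t=\0$ term, which contributes $1$, rewrites the quantity inside the flatness factor as the tail sum $\mathrm{vol}(\cV(\L))\,g_{\x,\theta}(\L)-1=\sum_{\t\in\hat{\L}\setminus\{\0\}}e^{-\theta^2\Vert\t\Vert^2/2}\,e^{i\langle\t,\x\rangle}$.

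From here the key step is to locate the maximizer in the definition of $\epsilon_\L(\theta)$: by the triangle inequality the modulus of the tail sum is at most $\sum_{\t\in\hat{\L}\setminus\{\0\}}e^{-\theta^2\Vert\t\Vert^2/2}$, which is precisely its value at $\x=\0$. This yields the closed form $\epsilon_\L(\theta)=\sum_{\t\in\hat{\L}\setminus\{\0\}}e^{-\theta^2\Vert\t\Vert^2/2}$ and, at the same time, shows $g_{\x,\theta}(\L)\le g_{\0,\theta}(\L)=g_\theta(\L)$ for every $\x$. The upper bound $g_{\z,\theta}(\L)/g_\theta(\L)\le 1$ in the first assertion is then immediate, while the lower bound follows by applying the two-sided estimate $1-\epsilon_\L(\theta)\le \mathrm{vol}(\cV(\L))\,g_{\x,\theta}(\L)\le 1+\epsilon_\L(\theta)$ to the numerator (at $\x=\z$) and the denominator (at $\x=\0$).

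The remaining two assertions fall straight out of the closed form. For the monotonicity $\epsilon_\L(\theta)\ge\epsilon_\L(\kappa)$ when $\kappa\ge\theta$, I would note that each summand $e^{-\theta^2\Vert\t\Vert^2/2}$ with $\t\neq\0$ is nonincreasing in $\theta$, hence so is the whole sum. For the scaling identity I would use $\widehat{a\L}=\tfrac{1}{a}\hat{\L}$ and substitute:
\[
\epsilon_{a\L}(a\theta)=\sum_{\t\in\widehat{a\L}\setminus\{\0\}}e^{-(a\theta)^2\Vert\t\Vert^2/2}=\sum_{\t'\in\hat{\L}\setminus\{\0\}}e^{-(a\theta)^2\Vert\t'/a\Vert^2/2}=\sum_{\t'\in\hat{\L}\setminus\{\0\}}e^{-\theta^2\Vert\t'\Vert^2/2}=\epsilon_\L(\theta).
\]

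The only genuinely delicate point is the passage to the closed form for $\epsilon_\L(\theta)$, and in particular the \emph{sharp} upper bound of $1$ in the first assertion: the naive two-sided flatness estimate would give only $(1+\epsilon_\L(\theta))/(1-\epsilon_\L(\theta))$, so one really needs the observation that the periodized Gaussian attains its maximum at the lattice points, which is exactly what the Fourier representation together with the triangle inequality provide. Once this is in hand, every part of the lemma is a routine consequence of the closed form.
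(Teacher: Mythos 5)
Your proof is correct. Note, however, that the paper gives no proof of this lemma at all: it is quoted verbatim from the cited reference \cite{Ling14}, so there is no internal argument to compare against. Your Poisson-summation route is essentially the standard proof from that reference: the identity $\mathrm{vol}(\cV(\L))\,g_{\x,\theta}(\L)=\sum_{\t\in\hat{\L}}e^{-\theta^2\Vert\t\Vert^2/2}e^{i\langle\t,\x\rangle}$ gives the theta-series closed form $\epsilon_{\L}(\theta)=\sum_{\t\in\hat{\L}\setminus\{\0\}}e^{-\theta^2\Vert\t\Vert^2/2}$, from which the monotonicity in $\theta$ and the scaling identity $\epsilon_{a\L}(a\theta)=\epsilon_{\L}(\theta)$ (via $\widehat{a\L}=\frac{1}{a}\hat{\L}$) are immediate, and the sharp upper bound $g_{\z,\theta}(\L)\leq g_{\theta}(\L)$ follows from the termwise triangle inequality. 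You are right that this last point is the only delicate one: the naive two-sided flatness estimate would only yield $(1+\epsilon_{\L}(\theta))/(1-\epsilon_{\L}(\theta))$, and the observation that the periodized Gaussian attains its maximum at lattice points is exactly what closes the gap. One detail worth stating explicitly: the definition of $\epsilon_{\L}(\theta)$ takes a maximum over $\x\in\cV(\L)$ only, whereas the first assertion concerns arbitrary $\z\in\R^n$; this is covered in your argument because the triangle-inequality bound holds for every $\x\in\R^n$ (equivalently, because $g_{\z,\theta}(\L)$ is $\L$-periodic in $\z$), so the two-sided estimate $1-\epsilon_{\L}(\theta)\leq \mathrm{vol}(\cV(\L))\,g_{\z,\theta}(\L)\leq 1+\epsilon_{\L}(\theta)$ is valid everywhere, which is what the lower bound on the ratio requires.
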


We will show that if a certain flatness factor of $\Lc$ is asymptotically vanishing in $n$, then we can obtain strong secrecy.
Specifically,
\begin{theorem}
 Let $\epsilon:=\epsilon_{\Lc}\left( \sqrt{\frac{P}{h_1^2+h_2^2}}\right)$. If $\epsilon<1/16e$, and $\L/\Lc$ has no nonzero element whose order divides $h_1$ or $h_2$, then
 \[
  I(X;h_1\U+h_2\V)\leq \frac{16\epsilon}{3}\left( \log_2|\Gp|-\log_2\left(\frac{16\epsilon}{3}\right) \right).
 \]
\label{thm:strongsec_asymmetric}
\end{theorem}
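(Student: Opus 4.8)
The plan is to show that the conditional law of $\W:=h_1\U+h_2\V$ (the noiseless received vector) given $X=x$ is, up to a total-variation error governed by $\epsilon$, the \emph{same} discrete Gaussian on $\L$ for every $x$, and then to turn this statistical-distance statement into a mutual-information bound using continuity of entropy on the finite alphabet $\Gp$. First I would exploit the coset structure. Since $h_1,h_2$ are co-prime, $h_1\Lc+h_2\Lc=\Lc$, so conditioned on $X=x,Y=y$ the vector $\W$ is supported on the single coset $\Lc+h_1\x+h_2\y$, whose image in $\L/\Lc\cong\Gp$ is $h_1x\oplus h_2y$. Because the order of no nonzero element of $\Gp$ divides $h_2$, multiplication by $h_2$ is a bijection of $\Gp$; hence for fixed $x$ every $\w\in\L$ is consistent with exactly one $y=y(\w,x)$, so that $p_{\W\mid x}(\w)=\tfrac1M\,p_{\W\mid x,y(\w,x)}(\w)$. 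In particular $\W\bmod\Lc$ is already exactly uniform and independent of $X$; all the leakage lives in the fine structure within the cosets.

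The core is an explicit convolution of the two scaled discrete Gaussians. Parametrising the solutions of $h_1\u+h_2\v=\w$ with $\u\in\Lc+\x,\ \v\in\Lc+\y$ by a single $\t\in\Lc$ and completing the square, I would use $\|\u\|^2+\|\v\|^2=(h_1^2+h_2^2)\|\t-\t^*\|^2+\tfrac{\|\w\|^2}{h_1^2+h_2^2}$ for an appropriate shift $\t^*=\t^*(\w,x,y)$. Summing $g_{\sqrt P}(\u)\,g_{\sqrt P}(\v)$ over $\t\in\Lc$ then collapses, after the $(2\pi P)^n$ prefactors cancel exactly, to
\[
 p_{\W\mid x}(\w)=\frac1M\cdot\frac{g_{-\t^*,\theta}(\Lc)}{g_{\x,\sqrt P}(\Lc)\,g_{\y,\sqrt P}(\Lc)}\;g_{s}(\w),
\]
where $\theta:=\sqrt{P/(h_1^2+h_2^2)}$ and $s:=\sqrt{(h_1^2+h_2^2)P}$. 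This is the decisive identity: the $\w$-shape $g_{s}(\w)$ carries no dependence on $x$, and the entire $(x,y)$-dependence is confined to the three lattice sums $g_{-\t^*,\theta}(\Lc)$, $g_{\x,\sqrt P}(\Lc)$ and $g_{\y,\sqrt P}(\Lc)$. It also explains why the theorem's flatness factor is evaluated at the \emph{harmonic} width $\theta$ rather than at $s$.

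Next I would apply Lemma~\ref{lemma:flatnessfactor_prop}. As $\sqrt P\ge\theta$, its monotonicity gives $\epsilon_{\Lc}(\sqrt P)\le\epsilon_{\Lc}(\theta)=\epsilon$, so each of the three shifted sums lies in $[\tfrac{1-\epsilon}{1+\epsilon},1]$ times its centred, $(x,y)$-free value. Comparing $p_{\W\mid x}$ against the $x$-independent reference $q:=g_{s}(\cdot)/g_{s}(\L)$ and pinning the residual constant by normalisation, I would conclude $p_{\W\mid x}(\w)/q(\w)\in[\beta^{-3},\beta^{3}]$ uniformly in $\w,x$, with $\beta:=\tfrac{1+\epsilon}{1-\epsilon}$. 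This yields $\mathbb V(p_{\W\mid x},q)\le\tfrac12(\beta^{3}-1)$ for every $x$, and hence, by the triangle inequality and averaging over $x$, $\mathbb V(p_{X\W},p_X\,p_{\W})\le\delta$ with $\delta=\tfrac{16\epsilon}{3}$; the hypothesis $\epsilon<1/16e$ keeps $\beta$ in the regime where the linearisation $\beta^{3}-1=O(\epsilon)$ produces exactly this constant.

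Finally I would convert distance into information: writing $I(X;\W)=H(X)-H(X\mid\W)=\mathbb E_{\W}[\,H(p_X)-H(p_{X\mid\W})\,]$ and applying the continuity-of-entropy estimate on the \emph{finite} alphabet $\Gp$ (so the size entering the bound is $M=|\Gp|$, not the infinite lattice), the concavity of $t\mapsto t\log_2\tfrac{M}{t}$ together with Jensen's inequality give $I(X;\W)\le\delta\log_2\tfrac{M}{\delta}=\tfrac{16\epsilon}{3}\bigl(\log_2|\Gp|-\log_2\tfrac{16\epsilon}{3}\bigr)$, with $\epsilon<1/16e$ ensuring $\delta$ stays below the threshold for which this estimate is valid; the bound for $Y$ follows symmetrically using that no element's order divides $h_1$. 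The hard part is the convolution step: one must correctly recognise that summing the product of the two scaled Gaussians over $\Lc$ yields a lattice sum of width $\theta$, not $s$, and must control the normalising constants so that all $x$-dependence collapses into flatness-factor wobble — tracking these constants to land precisely on $16/3$ is routine but delicate.
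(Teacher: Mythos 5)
Your outline follows essentially the same route as the paper's proof: write $p_{\W|x}$ as the convolution of the two sampled Gaussians over the coset structure, complete the square so that the sum over the coset collapses to a shifted lattice sum at width $\theta:=\sqrt{P/(h_1^2+h_2^2)}$ times a Gaussian shape in $\w$, control every shifted lattice sum through the flatness factor (via the monotonicity and scaling properties of Lemma~\ref{lemma:flatnessfactor_prop}), and convert an average variational-distance bound into a mutual-information bound via the Csisz\'ar--Narayan lemma \cite{CN04} and monotonicity of $-t\log_2 t$. Your central identity is correct (it is the paper's expression with the scaled lattices $h_1\Lc$, $h_2\Lc$, $h_1h_2\Lc$ pulled back to $\Lc$), and your observation that for fixed $x$ every $\w\in\L$ is consistent with exactly one $y$ is a cleaner treatment of the support bookkeeping than the paper's formal sum over $y$.

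The genuine gap is exactly the step you wave off as ``routine but delicate'': your constants do not, and cannot, land on $16/3$. From your window $p_{\W|x}(\w)/q(\w)\in[\beta^{-3},\beta^{3}]$ with $\beta=(1+\epsilon)/(1-\epsilon)$, even the distance to your own reference is only bounded by $\beta^{3}-1=(6\epsilon+2\epsilon^{3})/(1-\epsilon)^{3}\geq 6\epsilon$, which already exceeds $16\epsilon/3$ for \emph{every} $\epsilon>0$; after the triangle inequality through $q$, and after accounting for the fact that \cite{CN04} takes the unnormalized distance $\sum_{\w\in\L}|p_{\W|x}(\w)-p_{\W}(\w)|$ (the paper's $\mathbb{V}$, twice the total-variation distance your factors of $\tfrac12$ implicitly use), your $\delta$ is roughly $12\epsilon$. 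Since $t\mapsto t(\log_2|\Gp|-\log_2 t)$ is increasing on the relevant range, this proves only a strictly weaker bound than the one stated. The loss comes from charging the flatness wobble once per lattice sum and once more for normalization. The repair: $g_{\x,\sqrt P}(\Lc)$ does not depend on $\w$, so it cancels \emph{exactly} when $p_{\W|x}$ is normalized and should not be charged at all; the only genuine wobble is the ratio of the $\t^*$-sum to the $\y$-sum, which lies in $[\beta^{-1},\beta]$, and bounding $\sum_\w|p_{\W|x}(\w)-q(\w)|$ by the mean absolute deviation of this ratio (at most half its range) rather than by two-sided pinning gives $\sum_\w|p_{\W|x}(\w)-q(\w)|\leq\frac{2\epsilon}{(1-\epsilon)^2}$, hence $\mathbb{V}(p_{\W},p_{\W|x})\leq\frac{4\epsilon}{(1-\epsilon)^2}$, which is at most $16\epsilon/3$ precisely because $(1-\epsilon)^2\geq 3/4$ when $\epsilon<1/16e$. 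This intermediate bound $4\epsilon/(1-\epsilon)^2$ is what the paper's proof of Lemma~\ref{lemma:vdistance_bound} produces (there, by keeping the $\y$-dependent sums inside the reference $p(\w)$ and centering only the $x$-dependent ones, before rounding up to $16\epsilon$), and it is the actual source of the theorem's constant; without some such sharpening, your proposal establishes strong secrecy but not the stated inequality.
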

\vspace{-0.3cm}

In most communication problems, we would like to have $|\Gp|$ growing exponentially in the dimension $n$. In such a scenario, it is sufficient to have $\epsilon =o(1/n)$ to ensure that
 $\cI(X;h_1\U+h_2\V)\to 0$ and $\cI(Y;h_1\U+h_2\V)\to 0$
 as $n\to\infty$, and thus guaranteeing strong secrecy.
In fact, there exist Construction-A lattices for which the flatness factor $\epsilon_{\Lc}(\theta)$ goes to zero \emph{exponentially} in $n$
for all $\theta$ that satisfies $\text{vol}(\cV(\Lc))<2\pi \theta^2$~\cite{Ling14} (also called \emph{secrecy-good} lattices). Suppose we choose
$\Lc$ which is secrecy-good, and $\text{vol}(\cV(\Lc))<2\pi \alpha^2 P$ for some $\alpha<1$. 
Then,   $I(X;\W)$ and $I(Y;\W)$ can be driven to zero exponentially in $n$ for all co-prime $h_1,h_2$ that satisfy $1/(h_1^2+h_2^2)> \alpha^2$,
thereby ensuring strong secrecy. Unlike the scheme of Section~\ref{sec:perfectsec} which guaranteed perfect secrecy for any pair of nested
 lattices, this scheme requires $\Lc$ to be secrecy-good to obtain strong security.
 Before we prove Theorem~\ref{thm:strongsec_asymmetric}, we state the following technical lemmas.
\begin{lemma}
     Let $\L$ be a lattice in $\R^n$, and  $k_1,k_2$ be co-prime integers. Then, $\{ k_1\u+k_2\v:\u,\v\in \L \}=\L$.
\label{lemma:coprime}
\end{lemma}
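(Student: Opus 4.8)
The plan is to prove the set equality $\{k_1\u + k_2\v : \u,\v \in \L\} = \L$ by showing containment in both directions. Denote the left-hand set by $S$. The inclusion $S \subseteq \L$ is immediate: since $\L$ is a lattice (hence a group under addition closed under integer scaling), for any $\u,\v \in \L$ the integer combination $k_1\u + k_2\v$ again lies in $\L$. So the real content is the reverse inclusion $\L \subseteq S$, i.e.\ every lattice point $\w \in \L$ can be written as $k_1\u + k_2\v$ for some $\u,\v \in \L$.

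The key idea is to exploit coprimality of $k_1$ and $k_2$ via B\'ezout's identity. Since $\gcd(k_1,k_2)=1$, there exist integers $a,b \in \Z$ such that $a k_1 + b k_2 = 1$. Now fix an arbitrary $\w \in \L$. Setting $\u := a\w$ and $\v := b\w$, both of which lie in $\L$ because $\L$ is closed under integer scaling, I obtain
\[
  k_1\u + k_2\v = k_1 (a\w) + k_2(b\w) = (a k_1 + b k_2)\w = \w.
\]
This exhibits $\w$ as an element of $S$, establishing $\L \subseteq S$ and completing the proof.

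I do not anticipate a serious obstacle here, as both inclusions are elementary once B\'ezout's identity is invoked; the lemma is essentially a restatement of the fact that the subgroup of $\L$ generated by scalar multiples by $k_1$ and $k_2$ is all of $\L$ precisely when $k_1,k_2$ are coprime. The one point worth stating explicitly is the closure property being used: a lattice $\L \subseteq \R^n$ is a discrete additive subgroup, so it is closed under negation and addition, hence under multiplication by any integer (including the possibly negative B\'ezout coefficients $a,b$). If anything, the mild care needed is to make sure the argument does not secretly require $k_1,k_2$ to be positive—but since $\L$ is closed under integer multiples of either sign, the construction $\u=a\w$, $\v=b\w$ goes through regardless of signs.
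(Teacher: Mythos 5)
Your proof is correct and follows essentially the same route as the paper's: both inclusions, with the nontrivial direction handled by B\'ezout's identity ($ak_1+bk_2=1$) and closure of $\L$ under integer scaling. Your version simply spells out the explicit choice $\u=a\w$, $\v=b\w$ that the paper's one-line sketch leaves implicit.
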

\vspace{-0.2cm}
\begin{proof}
      Clearly, $\{ k_1\u+k_2\v:\u,\v\in \L \}\subseteq\L$.     
     The converse, $\L\subseteq \{ k_1\u+k_2\v:\u,\v\in \L \}$ 
     can be proved using the fact that $\exists\: m,l\in\Z$ such that $k_1m+k_2l=1$ if $k_1,k_2$ are co-prime, and $m\x,l\x\in\L$ for $\x\in \L$.
\end{proof}
\begin{lemma}
 Let $k_1,k_2$ be co-prime integers, and $\w_1,\w_2\in \R^n$. If $\w_2-\w_1\notin\L$, then $(k_1\L+\w_1)\cap(k_2\L+\w_2)$ is empty. Otherwise, there exists some $\w'\in \R^n$ so that
 $
   (k_1\L+\w_1)\cap(k_2\L+\w_2)=k_1k_2\L+\w'.$

 \label{lemma:cosets_of_intmult}
\end{lemma}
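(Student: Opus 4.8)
The plan is to reduce the statement to two independent facts: a criterion for when the two cosets meet, and an identification of the ``common difference lattice'' $k_1\L\cap k_2\L$ with $k_1k_2\L$. These combine to give both the emptiness claim and the explicit coset form.

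First I would settle nonemptiness. A point lies in $(k_1\L+\w_1)\cap(k_2\L+\w_2)$ precisely when there exist $\u,\v\in\L$ with $k_1\u+\w_1=k_2\v+\w_2$, i.e.\ $\w_2-\w_1=k_1\u-k_2\v$. Since negation is a bijection of $\L$, the set $\{k_1\u-k_2\v:\u,\v\in\L\}$ coincides with $\{k_1\u+k_2\v:\u,\v\in\L\}$, which equals $\L$ by Lemma~\ref{lemma:coprime}. Hence such $\u,\v$ exist if and only if $\w_2-\w_1\in\L$. This immediately yields the first claim (the intersection is empty when $\w_2-\w_1\notin\L$) and, in the complementary case, produces at least one intersection point, which I will call $\w'$.

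Assuming the intersection is nonempty, I would fix such a $\w'$ and show the intersection is the coset $(k_1\L\cap k_2\L)+\w'$. Indeed, a point $\x$ lies in the intersection iff $\x-\w'\in k_1\L$ (both $\x$ and $\w'$ lying in $k_1\L+\w_1=k_1\L+\w'$) and $\x-\w'\in k_2\L$ (similarly), i.e.\ iff $\x-\w'\in k_1\L\cap k_2\L$. This is just a difference-of-coset-members computation and carries no real difficulty.

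The crux of the argument, and the step I expect to be the main obstacle, is proving $k_1\L\cap k_2\L=k_1k_2\L$ for co-prime $k_1,k_2$. The inclusion $k_1k_2\L\subseteq k_1\L\cap k_2\L$ is immediate. For the reverse, I would take $\y=k_1\u=k_2\v$ with $\u,\v\in\L$ and invoke a B\'ezout identity $mk_1+lk_2=1$ to write $\u=(mk_1+lk_2)\u=k_2(m\v+l\u)$, using $k_1\u=k_2\v$; this gives $\y=k_1\u=k_1k_2(m\v+l\u)\in k_1k_2\L$. Substituting into the coset description then yields $(k_1\L+\w_1)\cap(k_2\L+\w_2)=k_1k_2\L+\w'$, completing the proof. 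The one point requiring care throughout is keeping the coprimality hypothesis explicit, as both the nonemptiness criterion and the intersection identity break down without it.
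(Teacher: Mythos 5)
Your proof is correct, and its skeleton matches the paper's: both characterize nonemptiness by $\w_2-\w_1\in\L$ via Lemma~\ref{lemma:coprime}, and both reduce the coset identification to the key fact $k_1\L\cap k_2\L=k_1k_2\L$. The differences are in how the two sub-steps are executed. For the coset structure, the paper builds an explicit representative: writing $\w_2-\w_1=k_1\u+k_2\v$ (again by Lemma~\ref{lemma:coprime}), it shows $(k_1\L)\cap(k_2\L+\w_2-\w_1)=k_1k_2\L+k_1\u$ by a two-sided inclusion; you instead take an arbitrary common point $\w'$ and invoke the general group-theoretic fact that a nonempty intersection of two cosets is a coset of the intersection of the subgroups, which shortens the bookkeeping. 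For the key fact itself, the paper passes to a generator matrix $G$, rewrites $k_1G\x_1=k_2G\x_2$ as an equation over $\Z^n$, and applies coprimality of integers there; your B\'ezout computation $\u=(mk_1+lk_2)\u=k_2(m\v+l\u)$ stays inside the lattice, needs no generator matrix (and hence no implicit appeal to the injectivity of $G$), and would hold verbatim in any abelian group. What the paper's route buys is an explicit formula for $\w'$ in terms of the given data; what yours buys is brevity and generality. Both arguments are complete.
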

\vspace{-0.1cm}
\begin{proof}
Define $\w=\w_2-\w_1$. We can write $(k_1\L+\w_1)\cap(k_2\L+\w_2)=(k_1\L\cap(k_2\L+\w))+\w_1$.
If $\w\notin \L$, then clearly $(k_1\L)\cap(k_2\L+\w)=\{ \}$.

Now suppose that $\w\in \L$. We can write $\w=k_1\u+k_2\v$ for some $\u,\v\in \L$.
We will prove that $(k_1\L)\cap(k_2\L+\w)=k_1k_2\L+k_1\u$.
Since $k_2\L+\w=k_2\L+k_1\u$, we have $k_1k_2\L+k_1\u\subseteq k_2\L+\w$.
Since we also have $k_1k_2\L+k_1\u\subseteq k_1\L$, we can say that $(k_1k_2\L+k_1\u)\subseteq (k_1\L)\cap(k_2\L+\w)$.
To complete the proof, we need to show that $(k_1\L)\cap(k_2\L+\w)\subseteq (k_1k_2\L+k_1\u)$.

For every $\mathbf{\lambda}\in (k_1\L)\cap(k_2\L+\w)=(k_1\L)\cap(k_2\L+k_1\u)$,
there exist $\x,\y\in \L$ so that $\mathbf{\lambda}=k_1\x=k_2\y+k_1\u$.
In other words, $\mathbf{\lambda}-k_1\u=k_1(\x-\u)=k_2\y$.
Hence, $\mathbf{\lambda}-k_1\u\in k_1\L\cap k_2\L$. We now claim that since $k_1$ and $k_2$ are co-prime integers, $k_1\L \cap k_2\L=k_1k_2\L$. Clearly, $k_1k_2\L \subseteq k_1\L\cap k_2\L $.
Let $G$ be a generator matrix for $\L$. For every $\x\in k_1\L \cap k_2\L$, there exist $\x_1,\x_2\in\Z^n$ so that $\x = k_1G\x_1=k_2G\x_2$. In other words, $k_1\x_1=k_2\x_2$, which implies that $\x_1\in k_2\Z^n$, and $\x_2\in k_1\Z^n$ since $k_1,k_2$ are co-prime. Hence, $\x\in k_1k_2\L$, and $k_1\L\cap k_2\L  \subseteq k_1k_2\L  $.
Therefore,
$\mathbf{\lambda}-k_1\u\in k_1k_2\L$, or $\mathbf{\lambda}\in k_1k_2\L+k_1\u$.
Hence, $(k_1\L)\cap(k_2\L+\w)\subseteq (k_1k_2\L+k_1\u)$. This completes the proof. 
\end{proof} 

Fix any coset (message) $x\in\Gp$. 
Let $\W:=h_1\U+h_2\V$.
We define the \emph{variational distance} between $p_{\W}$ and $p_{\W|x}$ to be
\[
 \mathbb{V}(p_{\W},p_{\W|x}):=\sum_{\w\in \L} \vert p_{\W}(\w)-p_{\W|x}(\w) \vert,
\]
and the average variational distance  as
$$
 \overline{\mathbb{V}}:=\frac{1}{M}\sum_{x\in\Gp} \mathbb{V}(p_{\W},p_{\W|x}).
$$
To prove the theorem, we will find an upper bound on the average variational distance, and then bound the mutual information using the average variational distance.
Recall that
 $\epsilon=\epsilon_{\Lc}\left(\sqrt{P/(h_1^2+h_2^2)}\right)$. 
\begin{lemma}
 If $\epsilon<1/2$, and $\L/\Lc$ has no nonzero element whose order divides $h_1$ or $h_2$, then for every
 $x\in \L/\Lc$, we have
 \[
  \mathbb{V}(p_{\W},p_{\W|x}) \leq 16\epsilon.
 \]
 \label{lemma:vdistance_bound}
\end{lemma}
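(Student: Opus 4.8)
\textbf{Proof plan for Lemma~\ref{lemma:vdistance_bound}.}
The plan is to fix messages $x,y\in\Gp$ (with coset representatives $\x,\y$), derive an exact closed form for the conditional pmf of $\W=h_1\U+h_2\V$ given $X=x,Y=y$, and then show that both $p_{\W|x}$ and $p_\W$ are, uniformly in $\w$, multiplicatively within $(1\pm O(\epsilon))$ of the common reference pmf $\nu(\w):=g_{\sqrt{(h_1^2+h_2^2)P}}(\w)/g_{\sqrt{(h_1^2+h_2^2)P}}(\L)$ on $\L$. Recall that with $f=g_{\sqrt P}$ the encoding~(\ref{eq:pU_X}) makes $\U$ and $\V$ \emph{independent discrete Gaussians} on the cosets $x$ and $y$. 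The desired bound then follows from $\mathbb{V}(p_\W,p_{\W|x})\le\mathbb{V}(p_\W,\nu)+\mathbb{V}(\nu,p_{\W|x})$ together with the elementary fact that $|p(\w)/\nu(\w)-1|\le\delta$ for all $\w$ implies $\mathbb{V}(p,\nu)\le\delta$.

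First I would pin down the support. Writing $\u=\x+\mathbf{a}$, $\v=\y+\mathbf{b}$ with $\mathbf{a},\mathbf{b}\in\Lc$, the constraint $h_1\u+h_2\v=\w$ forces $h_1\u\in(h_1\Lc+h_1\x)\cap(h_2\Lc+(\w-h_2\y))$. By Lemma~\ref{lemma:cosets_of_intmult} this intersection is empty unless $\w\in h_1\x+h_2\y+\Lc$, and is otherwise a single coset of $h_1h_2\Lc$; in particular (using Lemma~\ref{lemma:coprime}, so that $h_1\Lc+h_2\Lc=\Lc$) the vector $\W$ given $(x,y)$ is supported on the single coset of $\Lc$ corresponding to $h_1x\oplus h_2y\in\Gp$, and each admissible $\w$ has preimage $\u=\u_0+h_2\mathbf{s},\ \v=\v_0-h_1\mathbf{s}$, $\mathbf{s}\in\Lc$.

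Next I would substitute this parametrization into the product of Gaussians and complete the square in $\mathbf{s}$ to obtain
\[
\|\u_0+h_2\mathbf{s}\|^2+\|\v_0-h_1\mathbf{s}\|^2=\frac{\|\w\|^2}{h_1^2+h_2^2}+(h_1^2+h_2^2)\Bigl\|\mathbf{s}+\tfrac{h_2\u_0-h_1\v_0}{h_1^2+h_2^2}\Bigr\|^2,
\]
so that summing over $\mathbf{s}\in\Lc$ and normalizing yields $p_{\W|x,y}(\w)\propto g_{\sqrt{(h_1^2+h_2^2)P}}(\w)\,g_{-\mathbf{c}(\w),\theta}(\Lc)$ with $\theta=\sqrt{P/(h_1^2+h_2^2)}$ and $\mathbf{c}(\w)=(h_2\u_0-h_1\v_0)/(h_1^2+h_2^2)$ --- exactly the width defining $\epsilon$. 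I would then invoke the order hypothesis: since no nonzero element of $\L/\Lc\cong\Gp$ has order dividing $h_2$, the map $y\mapsto h_2y$, and hence $y\mapsto h_1x\oplus h_2y$, is a bijection of $\L/\Lc$. Thus for fixed $x$ each $\w$ receives a contribution from exactly one $y$, giving $p_{\W|x}(\w)=\tfrac1M p_{\W|x,y^\ast}(\w)$, and $p_\W$ is obtained by averaging once more over $x$ (over the $M$ pairs mapping $\w$ to its coset).

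Finally I would apply Lemma~\ref{lemma:flatnessfactor_prop} to every discrete-Gaussian-over-$\Lc$ factor above --- the center-dependent $g_{-\mathbf{c}(\w),\theta}(\Lc)$, the normalizers $g_{\x,\sqrt P}(\Lc)$ and $g_{\y,\sqrt P}(\Lc)$, and the coset masses $g_{\w,\sqrt{(h_1^2+h_2^2)P}}(\Lc)$. Since $\sqrt P\ge\theta$ and $\sqrt{(h_1^2+h_2^2)P}\ge\theta$, the monotonicity $\epsilon_{\Lc}(\kappa)\le\epsilon_{\Lc}(\theta)=\epsilon$ lets $\epsilon$ dominate all of them, so each factor lies within $\bigl[\tfrac{1-\epsilon}{1+\epsilon},\tfrac{1+\epsilon}{1-\epsilon}\bigr]$ of its center-free value; collecting the boundedly many factors (and using $M\,g_{\sqrt{(h_1^2+h_2^2)P}}(\Lc)\approx g_{\sqrt{(h_1^2+h_2^2)P}}(\L)$ for the normalization) shows $p_{\W|x}$ and $p_\W$ each lie within $(1\pm O(\epsilon))$ of $\nu$. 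The step I expect to be the main obstacle is precisely this bookkeeping: tracking the several flatness-factor estimates at the distinct widths $\theta$, $\sqrt P$, $\sqrt{(h_1^2+h_2^2)P}$ with the correct multiplicities, keeping every bound uniform in $\w$ so that the sum over $\L$ collapses against $\sum_\w\nu(\w)=1$, and then linearizing the resulting $\tfrac{1\pm\epsilon}{1\mp\epsilon}$ ratios under $\epsilon<1/2$ tightly enough to land on the clean constant $16\epsilon$.
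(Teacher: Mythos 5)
Your proposal follows essentially the same route as the paper's proof: the same support analysis via Lemma~\ref{lemma:coprime}, Lemma~\ref{lemma:cosets_of_intmult} and the order hypothesis (so that for fixed $x$ each $\w\in\L$ receives a contribution from exactly one $y$), the same completing-the-square reduction expressing $p_{\W|x}(\w)$ as a Gaussian in $\w$ times discrete Gaussian sums over $\Lc$ at effective width $\sqrt{P/(h_1^2+h_2^2)}$, and the same flatness-factor sandwich via Lemma~\ref{lemma:flatnessfactor_prop} with the monotonicity/scaling properties handling the different widths. The only substantive difference is in the final bookkeeping: the paper sandwiches $p_{\W|x}$ multiplicatively around an \emph{unnormalized} reference function $p(\w)$ and bounds $\sum_{\w}p(\w)$ directly from the sandwich together with $\sum_\w p_{\W|x}(\w)=1$ (thereby never needing to compare Gaussian sums at different widths), whereas you normalize through the pmf $\nu$ and apply the triangle inequality, which forces the extra cross-width comparison $M\,g_{k\sqrt{P}}(\Lc)\approx g_{k\sqrt{P}}(\L)$ and accumulates additional $(1\pm\epsilon)$ factors --- precisely the step you flag as the obstacle to landing the clean constant $16\epsilon$.
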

\begin{proof}
 Let $\x$ and $\y$ respectively denote the (unique) coset representatives of $x$ and $y$ in $\L\cap\cV(\Lc)$. We have
\begin{equation}
 p_{\W|x,y}(\w)=\sum_{\u\in h_1\Lc+h_1\x}p_{h_1\U|x}(\u)p_{h_2\V|y}(\w-\u)  .
 \label{eq:p_wxy}
 \end{equation}
 The supports of $p_{h_1\U|x}$ and $p_{h_2\V|y}$ are $h_1\Lc+h_1\x$ and $h_2\Lc+h_2\y$ respectively.
 Hence, $p_{h_1\U|x}(\u)p_{h_2\V|y}(\w-\u)$ is nonzero iff $\u\in (h_1\Lc+h_1\x)$ and $\w-\u\in (h_2\Lc+h_2\y)$, or equivalently, if $\u\in (h_1\Lc+h_1\x)\cap (h_2\Lc-h_2\y+\w)$. Using Lemma~\ref{lemma:cosets_of_intmult},
  we have
  \begin{multline}
    (h_1\Lc+h_1\x)\cap (h_2\Lc-h_2\y+\w)\\=\begin{cases}
                                          h_1h_2\Lc+\mathbf{w}' & \text{ if } \w\in \Lc+h_1\x+h_2\y\\
                                          \{ \} &\text{ otherwise.}
                                         \end{cases}
  \end{multline}
  for some $\w'\in\R^n$.
  We can therefore conclude that the support of $p_{\W|x,y}$ is $\Lc+h_1\x+h_2\y$.
  Since the order of no nonzero element of $\L/\Lc$ divides $h_2$, we have $[h_2\y]\bmod\Lc\neq \0$ if $[\y]\bmod\Lc\neq \0$. We are therefore assured that  if $\Lc+\y_1$ and $\Lc+\y_2$ are two distinct cosets of $\Lc$ in $\L$, then $\Lc+h_2\y_1$ and $\Lc+h_2\y_2$ are also distinct. Therefore, 
   $\cup_{\y\in \L\cap\cV(\Lc)}(\Lc+h_2\y)=\L$, and hence $\cup_{\y}(\Lc+h_1\x+h_2\y)=\L$.
Thus, we can conclude that the support of $p_{\W|x}$ is  $\L$. 

Substituting for $p_{h_1\U|x}$, $p_{h_2\V|y}$ in (\ref{eq:p_wxy}) and using this in $p_{\W|x}(\w)=\sum_{y\in \Gp}\frac{1}{M}p_{\W|x,y}$, we get
 \begin{align}
 p_{\W|x}(\w)& =\sum_{y\in \Gp}\sum_{\substack{\u\in h_1h_2\Lc+\w'}}    \frac{ e^{ -\frac{\Vert \u \Vert^2}{2h_1^2P} - \frac{\Vert \w-\u \Vert^2}{2h_2^2P} } }{ \xi   }
\label{eq:2}
\end{align}
where $$\xi:=M(2\pi h_1h_2P)^{n}g_{-h_1\x,h_1\sqrt{P}}(h_1\Lc)g_{-h_2\y,h_2\sqrt{P}}(h_2\Lc).$$
The remainder of the proof follows that of~\cite[Theorem 18]{vatedka14}, and we only give an outline.
A simple calculation tells us that
\begin{align}
    &e^{ -\frac{\Vert \u \Vert^2}{2h_1^2P} - \frac{\Vert \w-\u \Vert^2}{2h_2^2P} } 
    &=e^{\left( -\frac{\Vert \w \Vert^2}{2P(h_1^2+h_2^2)}-\frac{(h_1^2+h_2^2)}{2P(h_1^2h_2^2)}\left\Vert \u-\frac{h_1^2\w}{h_1^2+h_2^2} \right\Vert^2 \right)}.  &\notag
\end{align}
Let $h:=h_1h_2/\sqrt{h_1^2+h_2^2}$, and $k:=\sqrt{h_1^2+h_2^2}$. Using this and the above equation in (\ref{eq:2}), and simplifying, we get
\vspace{-0.2cm}
\begin{align}
  & p_{\W|x}(\w)
             =& e^{-\frac{\Vert \w \Vert^2}{2k^2P}}\sum_{y\in \Gp}\sum_{\substack{\u\in h_1h_2\Lc+\w'\\-h^2\w/h_2^2}}  \frac{e^{ -\frac{1}{2h^2P}\left\Vert \u \right\Vert^2 }}{\xi} &\notag
\end{align}
Let us define $\t:=\w'-(h^2/h_2^2)\w$. The above equation can be simplified to
\begin{align}
   p_{\W|x}(\w)&=\frac{1}{M}\sum_{y\in\Gp}\frac{g_{k\sqrt{P}}(\w)}{g_{-h_1\x,h_1\sqrt{P}}(h_1\Lc)} \frac{g_{-\t,h\sqrt{P}}(h_1h_2\Lc)}{g_{-h_2\y,h_2\sqrt{P}}(h_2\Lc)}&\notag
\end{align}
Using Lemma~\ref{lemma:flatnessfactor_prop}, we can show that 
$
    \epsilon_{h_1h_2\Lc}\left( \sqrt{\frac{h_1^2h_2^2P}{h_1^2+h_2^2}} \right)
    =\epsilon_{\Lc}\left( \sqrt{\frac{P}{h_1^2+h_2^2}} \right)
    =\epsilon,
    $
and also from Lemma~\ref{lemma:flatnessfactor_prop},
$$
   \frac{1-\epsilon}{1+\epsilon}\leq\frac{g_{-\t,h\sqrt{P}}(h_1h_2\Lc)}{g_{h\sqrt{P}}(h_1h_2\Lc)}\leq 1.
   \label{eq:g_hp_bounds}
$$
Similarly,
$$
   \frac{1-\epsilon_{\Lc}(\sqrt{P})}{1+\epsilon_{\Lc}(\sqrt{P})}\leq\frac{g_{-h_1\x,h_1\sqrt{P}}(h_1\Lc)}{g_{h_1\sqrt{P}}(h_1\Lc)}\leq 1.
$$
Since $\sqrt{h_1^2+h_2^2}> 1$, we have $\epsilon_{\Lc}(\sqrt{P})\leq \epsilon$. Using this, and the fact that $(1-x)/(1+x)$ is a decreasing function of $x$, we have

$$
   \frac{1-\epsilon}{1+\epsilon}\leq\frac{g_{-h_1\x,h_1\sqrt{P}}(h_1\Lc)}{g_{h_1\sqrt{P}}(h_1\Lc)}\leq 1.
$$
Let us define 
$$
   p(\w)=\frac{1}{M}\sum_{y\in\Gp}\frac{g_{k\sqrt{P}}(\w)}{g_{h_1\sqrt{P}}(h_1\Lc)} \frac{g_{h\sqrt{P}}(h_1h_2\Lc)}{g_{-h_2\y,h_2\sqrt{P}}(h_2\Lc)},
$$
which is a function independent of $\x$.
We can therefore say that
\begin{equation}
   \frac{1-\epsilon}{1+\epsilon}p(\w)\leq p_{\W|x}(\w) \leq \frac{1+\epsilon}{1-\epsilon}p(\w).
   \label{eq:pwx_bounds}
\end{equation}
Since $p(\w)$ does not depend on $x$, we can use the above to bound $p_{\W}(\w)=\frac{1}{M}\sum_{x}p_{\W|x}(\w)$ in the same manner, and obtain
$
   \sum_{\w\in\L}|p_{\W|x}(\w)-p_{\W}(\w)|\leq \frac{4\epsilon}{(1-\epsilon)^2}.
$
Using the fact that $\epsilon<1/2$, we get 
$
   \mathbb{V}(p_{\W},p_{\W|x})\leq 16\epsilon,
$
thus completing the proof. 
\end{proof}

We now have all the necessary tools to prove Theorem~\ref{thm:strongsec_asymmetric}.
\subsubsection*{Proof of Theorem~\ref{thm:strongsec_asymmetric}}
If $\epsilon<1/2$, we have  $\mathbb{V}(p_{\W},p_{\W|\x}) \leq 16\epsilon$ from Lemma~\ref{lemma:vdistance_bound}.
Since this is true for every $\x\in\L\cap\cV(\Lc)$, we also have  $\overline{\mathbb{V}}\leq 16\epsilon$.
We can then use [Lemma 1, \cite{CN04}], which says that if
 $|\Gp|>4$, then $I(\W;X)\leq \overline{\mathbb{V}}(\log_2|\Gp|-\log_2\overline{\mathbb{V}})$.
 
Since $-x\log x$ is an increasing function of $x$ for $x<1/e$, we can use the upper bound of $16\epsilon$ 
for $\overline{\mathbb{V}}$ if $\epsilon<1/16e$.
This completes the proof of the theorem.\qed

\subsection{Achievable rates in presence of Gaussian noise}\label{sec:strong_noisy}
As remarked in the previous section, we choose $\Lc$ so that the flatness factor $\epsilon_{\Lc}(\alpha\sqrt{P})$ goes to zero exponentially in $n$, for some $\alpha\leq 1$.
The following statement can be proved analogously to \cite[Theorem 16]{vatedka14}:
\begin{theorem}
If $\Lc$ is good for MSE quantization and secrecy-good, and $\L$ is good for AWGN channel coding, then the average transmit power
converges to $P$, and
any rate less than $\frac{1}{2}\log_2\frac{\alpha^2 P}{\nsvar}-\frac{1}{2}\log_2 e $ can be achieved with strong secrecy as long as
the order of no nonzero element of $\L/\Lc$ divides $h_1$ or $h_2$, and $1/(h_1^2+h_2^2)\geq \alpha^2$. 
\end{theorem}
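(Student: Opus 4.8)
The plan is to run the Gaussian-sampled scheme of Section~\ref{sec:strong_noiseless} (i.e.\ $f=g_{\sqrt P}$, so that $\U$ and $\V$ are discrete Gaussians of parameter $\sqrt P$ over the cosets $\Lc+\x$ and $\Lc+\y$) and to let the three goodness hypotheses discharge reliability, power, and secrecy separately, in the spirit of \cite[Theorem~16]{vatedka14}. Writing $\mathbf{s}:=h_1\U+h_2\V$ and $\W=\mathbf{s}+\bZ$, the rate is $R=\tfrac1n\log_2\frac{\mathrm{vol}(\cV(\Lc))}{\mathrm{vol}(\cV(\L))}$, so the whole argument reduces to pushing $\mathrm{vol}(\cV(\Lc))^{2/n}$ up to its largest secrecy-admissible value and $\mathrm{vol}(\cV(\L))^{2/n}$ down to its smallest reliability-admissible value, then reading off $R$.

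\emph{Reliability.} By Lemma~\ref{lemma:coprime} applied to $\L$ we have $\mathbf{s}\in\L$, and applying it to $\Lc$ (so that $h_1\Lc+h_2\Lc=\Lc$) shows $\mathbf{s}\in\Lc+h_1\x+h_2\y$, the coset representing $h_1X\oplus h_2Y$ under $\L/\Lc\cong\Gp$. Hence the relay need only recover $\mathbf{s}$ from $\W=\mathbf{s}+\bZ$ by nearest-lattice-point decoding in $\L$; the error event is $\bZ\notin\cV(\L)$, and since $\L$ is good for AWGN channel coding this probability vanishes whenever $\mathrm{vol}(\cV(\L))^{2/n}>2\pi e\,\nsvar\,(1+o(1))$. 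I would therefore take $\mathrm{vol}(\cV(\L))^{2/n}\to 2\pi e\,\nsvar$.

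\emph{Power and secrecy.} The average transmit power $\tfrac1n\mathbb E\|\U\|^2$ converges to $P$ by the standard second-moment estimate for the discrete Gaussian, which uses the vanishing flatness factor $\epsilon_{\Lc}(\sqrt P)\le\epsilon$ (monotonicity, Lemma~\ref{lemma:flatnessfactor_prop}) together with the MSE-quantization goodness of $\Lc$, exactly as in \cite[Theorem~16]{vatedka14}. For secrecy, since $\Lc$ is secrecy-good and I impose $\mathrm{vol}(\cV(\Lc))^{2/n}<2\pi\alpha^2P\le 2\pi P/(h_1^2+h_2^2)$, the flatness factor $\epsilon=\epsilon_{\Lc}(\sqrt{P/(h_1^2+h_2^2)})$ decays exponentially in $n$. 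Theorem~\ref{thm:strongsec_asymmetric} (whose order hypothesis is exactly the one assumed here) then bounds $I(X;\mathbf{s})$ by $\tfrac{16\epsilon}{3}(\log_2|\Gp|-\log_2\tfrac{16\epsilon}{3})$, which vanishes because $\log_2|\Gp|=nR$ grows only linearly while $\epsilon$ decays exponentially; the Markov chain $X\to\mathbf{s}\to\mathbf{s}+\bZ$ noted in the introduction gives $I(X;\W)\le I(X;\mathbf{s})\to0$, and symmetrically $I(Y;\W)\to0$. Finally, letting $\mathrm{vol}(\cV(\Lc))^{2/n}\to 2\pi\alpha^2P$ and combining with the reliability scale,
\[
 R=\tfrac12\log_2\frac{\mathrm{vol}(\cV(\Lc))^{2/n}}{\mathrm{vol}(\cV(\L))^{2/n}}\longrightarrow \tfrac12\log_2\frac{2\pi\alpha^2P}{2\pi e\,\nsvar}=\tfrac12\log_2\frac{\alpha^2P}{\nsvar}-\tfrac12\log_2 e,
\]
as claimed.

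\emph{Main obstacle.} The crux is the simultaneous existence of one nested pair $(\L,\Lc)$ meeting every constraint at once: $\Lc$ good for MSE quantization and secrecy-good with $\mathrm{vol}(\cV(\Lc))^{2/n}$ just below $2\pi\alpha^2P$, and $\L$ good for AWGN with $\mathrm{vol}(\cV(\L))^{2/n}$ just above $2\pi e\,\nsvar$, while the nesting ratio realizes the target $R$. This is precisely the multi-goodness existence statement established for random Construction-A ensembles, and I would adapt the argument of \cite[Theorem~16]{vatedka14} (in the Erez--Zamir tradition) to show that a random pair has all the required goodness properties simultaneously with high probability. The delicate point is that the shaping and coding scales are separated by exactly the gap between $2\pi$ and $2\pi e$, which is what produces the $-\tfrac12\log_2 e$ penalty and leaves no slack as $\nsvar$ approaches the threshold; a secondary check is that the discrete-Gaussian shaping of $\U,\V$ does not disturb the unconstrained nearest-point decoder, which it does not, since AWGN goodness controls $\bZ\notin\cV(\L)$ uniformly over the transmitted point.
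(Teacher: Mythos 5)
Your proposal is correct and takes essentially the same route the paper intends: the paper offers no proof of this theorem beyond the remark that it can be proved analogously to \cite[Theorem~16]{vatedka14}, and your argument---invoking Theorem~\ref{thm:strongsec_asymmetric} for noiseless strong secrecy with exponentially decaying $\epsilon$ from secrecy-goodness, data processing over the Markov chain $X \to h_1\U+h_2\V \to \W$ to handle the noise, AWGN-goodness of $\L$ for reliability, flatness-factor and quantization-goodness arguments for the power, and the volume accounting $R=\frac{1}{n}\log_2\frac{\mathrm{vol}(\cV(\Lc))}{\mathrm{vol}(\cV(\L))}$ that produces the $-\frac{1}{2}\log_2 e$ gap---is precisely that analogy, with the correct rate arithmetic. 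The one ingredient you defer, the existence of a single nested pair $(\L,\Lc)$ enjoying all the goodness properties simultaneously, is exactly what the cited reference establishes via random Construction-A ensembles, so your sketch has no substantive gap.
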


\section{Discussion}\label{sec:discussion}
So far, we studied the case where $h_1$ and $h_2$ were co-prime integers.
This can easily be extended to the general case where $h_1/h_2$ is rational.
We can express $h_1=hk_1$ and $h_2=hk_2$ for some $h\in\R$ and co-prime integers $k_1$ and $k_2$.
Then, it is easy to show that perfectly (resp.\ strongly) secure computation of $k_1X\oplus k_2Y$ can be performed at the relay as long as the order of no
nonzero element of $\L/\Lc$ divides $k_1$ or $k_2$, and $2/(|k_1|+|k_2|)> \alpha$ $\big(\text{resp.\ }1/(k_1^2+k_2^2)\geq \alpha^2\big)$.
Furthermore, the achievable rate is given by $\frac{1}{2}\log_2 \frac{h^2\alpha^2P}{\nsvar} - \log_2(2e)$ 
$\Big(\text{resp.\ }\frac{1}{2}\log_2\frac{h^2\alpha^2 P}{\nsvar}-\frac{1}{2}\log_2 e \Big)$.

\subsection{Irrational channel gains}
We now make the observation that if $h_1$ and $h_2$ are nonzero and $h_1/h_2$ is irrational, then
the relay can uniquely recover the individual messages if the channel is noiseless.
\begin{proposition}
     Suppose that $h_1,h_2$ are nonzero, and $h_1/h_2$ is irrational. Let $\Lf$ be a full-rank lattice in $\R^n$. Then, for every $\u,\v\in\Lf$, $\w=h_1\u+h_2\v$ uniquely determines $(\u,\v)$.
\end{proposition}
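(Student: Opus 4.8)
The plan is to prove the statement directly as an injectivity claim for the map $(\u,\v)\mapsto h_1\u+h_2\v$ on $\Lf\times\Lf$. Suppose two pairs $(\u_1,\v_1)$ and $(\u_2,\v_2)$ in $\Lf\times\Lf$ produce the same vector, i.e.\ $h_1\u_1+h_2\v_1=h_1\u_2+h_2\v_2$. Setting $\mathbf{a}:=\u_1-\u_2$ and $\mathbf{b}:=\v_2-\v_1$, both of which lie in $\Lf$ because $\Lf$ is a group, this rearranges to $h_1\mathbf{a}=h_2\mathbf{b}$. It therefore suffices to show that the only solution of $h_1\mathbf{a}=h_2\mathbf{b}$ with $\mathbf{a},\mathbf{b}\in\Lf$ is $\mathbf{a}=\mathbf{b}=\0$; from this $\u_1=\u_2$ and $\v_1=\v_2$ follow at once, giving uniqueness.

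Next I would exploit the full-rank hypothesis to pass to integer coordinates. Let $G$ be a generator matrix for $\Lf$; since $\Lf$ is full-rank, $G$ is an invertible $n\times n$ real matrix. Writing $\mathbf{a}=G\mathbf{p}$ and $\mathbf{b}=G\mathbf{q}$ with $\mathbf{p},\mathbf{q}\in\Z^n$, the relation $h_1\mathbf{a}=h_2\mathbf{b}$ becomes $G(h_1\mathbf{p}-h_2\mathbf{q})=\0$, and multiplying on the left by $G^{-1}$ yields $h_1\mathbf{p}=h_2\mathbf{q}$, hence coordinatewise $h_1 p_i=h_2 q_i$ for each $i$.

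Finally I would bring in the irrationality of $h_1/h_2$. Fix an index $i$ and suppose $q_i\neq 0$; since $h_1\neq 0$, the relation $h_1 p_i=h_2 q_i$ forces $p_i\neq 0$, and then $h_1/h_2=q_i/p_i\in\mathbb{Q}$, contradicting the hypothesis that $h_1/h_2$ is irrational. Hence $q_i=0$ for every $i$, and then $h_1 p_i=0$ with $h_1\neq 0$ gives $p_i=0$ for every $i$. Thus $\mathbf{p}=\mathbf{q}=\0$, so $\mathbf{a}=\mathbf{b}=\0$, which completes the argument.

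There is no serious obstacle here; the only subtlety is pinpointing where each hypothesis is used. Full-rankness is exactly what allows me to cancel $G$ and reduce the relation to one among integer coefficients, and irrationality is what forces that integer relation to be trivial. As an alternative, coordinate-free version one may note that $\mathbf{a}=(h_2/h_1)\mathbf{b}$ makes $\mathbf{a}$ and $\mathbf{b}$ collinear, so both lie in the rank-one discrete subgroup $\Lf\cap\R\mathbf{b}=\Z\mathbf{g}$; writing $\mathbf{b}=m\mathbf{g}$ and $\mathbf{a}=k\mathbf{g}$ with $k,m\in\Z$ gives $h_2/h_1=k/m\in\mathbb{Q}$ unless $\mathbf{b}=\0$. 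I would present the generator-matrix version as the main proof, since it makes the role of the full-rank hypothesis most transparent.
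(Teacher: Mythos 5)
Your proof is correct and follows essentially the same route as the paper's: pass to integer coordinates via a full-rank generator matrix, reduce the equality to an integer relation $h_1\mathbf{p}=h_2\mathbf{q}$, and use irrationality of $h_1/h_2$ to force that relation to be trivial. The only differences are cosmetic (you take differences $\mathbf{a},\mathbf{b}$ first and conclude $\mathbf{p}=\mathbf{q}=\0$ in one pass, where the paper argues componentwise on the four vectors and handles $\v_1=\v_2$ ``similarly''), so no further comment is needed.
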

\begin{proof}
     Consider any $\u_1,\u_2,\v_1,\v_2 \in\Lf$ that satisfy $h_1\u_1+h_2\v_1=h_1\u_2+h_2\v_2$. If $\mathsf{A}$ is a (full-rank) generator matrix of $\L$, then we can write
     $\u_1=\mathsf{A}^T\tilde{\u}_1$, $\u_2=\mathsf{A}^T\tilde{\u}_2$, $\v_1=\mathsf{A}^T\tilde{\v}_1$, 
     and $\v_2=\mathsf{A}^T\tilde{\v}_2$, where $\tilde{\u}_1,\tilde{\u}_2,\tilde{\v}_1,$ and $\tilde{\v}_2$ belong to $\Z^n$. 
     Therefore, $h_1(\tilde{\u}_1-\tilde{\u}_2)=h_2(\tilde{\v}_2-\tilde{\v}_1)$. 
     For $j=1,2$, and $1\leq i\leq n$, let $\tilde{u}_j(i)$ and $\tilde{v}_j(i)$ denote the $i$th components of $\tilde{\u}_j$ and $\tilde{\v}_j$ respectively. 
     Now suppose that $\u_1\neq\u_2$. 
     Then, there exists some $1\le i\le n$ such that $\tilde{u}_{1}(i)\neq \tilde{u}_2(i)$. 
     Rearranging $h_1(\tilde{u}_{1}(i)- \tilde{u}_2(i))=h_2(\tilde{v}_{2}(i)- \tilde{v}_1(i))$, we get
     $
          \frac{h_1}{h_2}=\frac{\tilde{v}_{2}(i)- \tilde{v}_1(i)}{\tilde{u}_{1}(i)- \tilde{u}_2(i)}.
     $
     However, the right hand side is clearly a rational number, which contradicts our hypothesis of $h_1/h_2$ being irrational. Therefore, $\u_1=\u_2$. Similarly, $\v_1=\v_2$.
\end{proof}
For our lattice-based scheme to achieve perfect/strong secrecy it is therefore necessary that
$h_1/h_2$ be rational, in which case we can write $h_1=hk_1$ and $h_2=hk_2$ for some $h\in \R$ and co-prime integers
$k_1$ and $k_2$. In addition to this, no element of $\L/\Lc$ can have its order dividing $k_1$ or $k_2$ if we want to achieve security.
While we have seen that the second requirement is sufficient to guarantee perfect/strong secrecy, we also claim that it is also a 
\emph{necessary} condition for perfect secrecy. To see why this is the case, recall that we want 
$p_{k_1\U+k_2\V|x}=p_{k_1\U+k_2\V}$ for all $x\in\Lf/\Lc$. For this, the supports of the two pmfs must be the same.
While the support of $p_{k_1\U+k_2\V|x}$ is $k_1\Lc+k_2\L+k_1\x$, the support of $p_{k_1\U+k_2\V}$ is $k_1\L+k_2\L=\L$ (since 
$\text{gcd}(k_1,k_2)=1$). We can write  
$k_1\Lc+k_2\L+k_1\x= \cup_{\y\in \L\cap\cV(\Lc)}(k_1\Lc+k_2\Lc+k_1\x+k_2\y) = \cup_{\y\in \L\cap\cV(\Lc)}(\Lc+k_1\x+k_2\y)$.
If the order of some element of $\L/\Lc$ divides $k_2$, then we can argue using the pigeon hole principle that
$\cup_{\y\in \L\cap\cV(\Lc)}(\Lc+k_1\x+k_2\y)\neq \L$, and hence, perfect secrecy is not obtained.
This justifies our claim.

The requirement of $h_1/h_2$ being rational to obtain security may appear discouraging for a practical scenario, where the channel gains are almost surely irrational.
However, we must note that we have used a rather pessimistic model for the system. In practice, the 
user nodes do have a rough estimate of the channel gains, and the channel is noisy.
While it may not be possible to achieve perfect security even in presence of noise when the channel gains are irrational
unknown to the user nodes,
we may hope to achieve strong secrecy. We observed that if we proceed along the lines of Lemma~\ref{lemma:vdistance_bound},
strong secrecy can be achieved if the flatness factors $\epsilon_{\Lc}\left(\sqrt{\frac{h_i^2P\nsvar}{h_i^2 P+\nsvar}}\right)=o(1/n)$
for $i=1,2$.
To achieve this, we could use a secrecy-good lattice scaled so that $\text{vol}(\cV(\Lc))<2\pi  \frac{h_i^2P\nsvar}{h_i^2P+\nsvar}$ for $i=1,2$.
However, it turns out that this is in conflict with the requirement of reliable decoding of $X$ and $Y$, for which we need $\text{vol}(\cV(\L))$
to be greater than $2\pi e \frac{h_i^2P\nsvar}{h_i^2P+\nsvar}$. Hence, it seems that a different approach is required to tackle this problem.

Before concluding the paper, we make a final remark.
Although the scheme presented in Section~\ref{sec:codingscheme} may not be optimal if the channel gains are not known exactly at the user nodes, 
we demonstrate that there is a scheme with which security can be obtained in such a scenario.
\subsection{Co-operative jamming: Security using Gaussian jamming signals}
We can use the following four-stage amplify-and-forward bidirectional relaying strategy: 
In the first phase, user $\tA$ transmits its codeword $\U_1$, which is jammed by a Gaussian random vector
$\V_1$ generated by $\tB$. The relay simply scales the received vector and sends it to $\tB$, who knows $\V_1$
and can recover $\U_1$. The channel from $\tA$ to $\tB$ can be modeled as a Gaussian wiretap channel, where
$\tR$ acts as the eavesdropper. Using a wiretap code~\cite{Ling14} for $\U$, we can achieve strong secrecy. User $\tB$
similarly uses a wiretap code to transmit its message to user $\tA$ via $\tR$ in the third and fourth phases.

A reasonable assumption to make is that the error in the estimation of
$h_1$ and $h_2$ at both user nodes is at most $\delta$.
To keep things simple, let us assume that $\tR$ simply forwards the received signal to the users without scaling.
At the end of the second phase, $\tB$ receives $h_1\U_1+h_2\V_1+\bZ$, where $\bZ=\bZ_1+\bZ_2$ is the sum of the noise vectors accumulated in the 
first two phases, and has variance $\nsvar_1+\nsvar_2$. Suppose that the estimates of $h_1,h_2$ made by $\tB$ are $h_1'$ and $h_2'$ respectively. 
Due to the error in estimation, there would be a residual component of $\V$ remaining even 
after the jamming signal has been removed. Therefore, $\tB$ ``sees'' an effective channel of $h_1'\U_1+\bZ_B$, where the effective noise is  
$\bZ_B=(h_1-h_1')\U_1+(h_2-h_2')\V_1+\bZ$.
On the other hand, $\tR$ ``sees'' the effective channel $h_1\U_1+\bZ'$, where $\bZ'=\bZ_1+h_2\V_1$. It can be shown that~\cite{Ling14} using the 
lattice Gaussian distribution for randomization, i.e., $p_{\U_1|X}$ given by (\ref{eq:pU_X}) with $f=g_{\sqrt{P}}$, 
a rate of $\frac{1}{4}\log_2\left( 1+\frac{h_1^2P}{2\delta^2P+\nsvar} \right)-\frac{1}{4}\log_2\left( 1+\frac{h_1^2P}{h_2^2P+\nsvar_1} \right)-\frac{1}{2}\log_2 e$
can be achieved by $\tA$ with strong secrecy. In fact, the rate can be slightly improved by using a modulo-and-forward scheme~\cite{Zhang15}
instead of the simple amplify-and-forward scheme for relaying.

\end{document}